\newcommand{\cmark}{\ding{51}}
\newcommand{\xmark}{\ding{55}}
\DeclareMathOperator*{\argmin}{arg\,min}
\newtheorem{observation}{Observation}
\newtheorem{constraint}{Constraint}
\title{Iterative Deliberation via Metric Aggregation}
\author{Gil Ben Zvi\inst{2} \and Eyal Leizerovich\inst{1} \and Nimrod Talmon\inst{1}}
\institute{Ben-Gurion University \and NRGene LTD}
\begin{document}
\maketitle

\begin{abstract}
We investigate an iterative deliberation process for an agent community wishing to make a joint decision.
We develop a general model consisting of a community of $n$ agents, each with their initial ideal point in some metric space $(X, d)$, such that in each iteration of the iterative deliberation process, all agents move slightly closer to the current winner, according to some voting rule $\mathcal{R}$.
For several natural metric spaces and suitable voting rules for them, we identify conditions under which such an iterative deliberation process is guaranteed to converge.
\end{abstract}

\section{Introduction}\label{section:introduction}

Agent communities wishing to reach joint decisions usually get involved in some voting process:
  agent preferences wrt.\ some agreed-upon options are being elicited, and their preferences are being aggregated through the use of some aggregation method.
Correspondingly, much of the research in computational social choice~\cite{moulin2016handbook} evolves around such aggregation methods, usually referred to as voting rules.

If voter preferences are rather diverse, then using a voting rule in a straightforward way might mean that the aggregated result (i.e., the result of the election) is not well-accepted by the agent community (e.g., a large minority may feel that their opinions are not being sufficiently heard).
To overcome this issue, it can be useful to precede the voting phase with a deliberation phase, in which agents may interact, mutually hoping to find some common grounds~\cite{cohen1989deliberation}.
When taken to the extreme, the best outcome of such a deliberation phase is that it would end in consensus:
  i.e., in a situation in which all agents eventually hold the same opinion; when all agents are in consensus. Then, informally speaking, the use of a voting rule is not needed, as all agents would be pleased with choosing the consensus opinion (technically, any \emph{unanimous} voting rule -- that chooses the consensus opinion whenever it exists -- would be accepted).

Naturally, there are many ways by which voting and deliberation may coexist and interact; we discuss some of them in Section~\ref{section:related work}. In this paper, our point of view is that the effect of deliberation is a change of the opinions of the agents (as a simplistic example, a right-wing voter may be more centrist after deliberating with a left-wing voter\footnote{Indeed, the result of such deliberation may be the opposite -- that the right-winger would be radicalized; we do not focus on such cases, but mention them in Section~\ref{section:outlook}.}).
Correspondingly, in our model we view deliberation as a ``black-box'' process whose result is the change of agent opinions. In particular, we do not discuss nor model the specifics of deliberation, but rather model the result of using deliberation in diminishing the opinion distances between agents.

In particular, here we consider an iterative process of deliberation:
  initially, each agent holds to her position, which is modeled as an element of some metric space; then, each iteration consists of an implicit voting step, followed by a discussion step.
  In the voting step, an aggregated outcome is identified using some voting rule; then, in the discussion step, agents slightly change their opinions, to be more inline with the aggregated outcome computed in the voting step (specifically, agents move slightly closer to the aggregated outcome).

Note that the voting rule ingredient of our model only affects the specifics of how agents change their opinions due to discussion, as it affects the aggregated outcome.
Our main interest is to characterize the situations for which such an iterative deliberation process converges, as we view converged processes as successful ones, in particular if the converged configuration is a consensus configuration (i.e., configurations in which all agents share the same opinion).

Specifically, throughout the paper we consider various metric spaces that correspond to certain social choice settings and several voting rules for each of these settings. Then, for each specific realization of our model -- that is, for each metric space $(X, d)$ and voting rule $\mathcal{R}$ -- we analyze whether our iterative deliberation process is guaranteed to be successful (i.e., whether for any initial configuration it is always the case that the agents will end up in consensus).

For the settings that guarantee convergence, we are also interested in worst-case upper bounds for the time needed for such convergence (i.e., for the number of iterations until convergence, in the worst case).
Finally, we are also interested in analyzing the possible results of such iterative deliberation processes, by comparing the initial agent opinions to the consensus opinion reached by such processes, whenever a consensus opinion is reached. 

Indeed, our model is very extreme in assuming that, in each iteration, all agents move slightly closer to the aggregated opinion, in a deterministic way. In Section~\ref{section:outlook} we discuss some relaxations to our model.
Note that the extremeness of our model means that our negative results -- in which we show that an iterative deliberation process need not be successful -- are very strong, as such negative results imply that, for such settings, even a very extremely optimistic deliberation process might not succeed.
Generally speaking, we believe that our results shed more light on the relation between deliberation and voting by effectively distinguish between metric spaces and voting rules that are more problematic wrt.\ deliberation as such for which deliberation has greater potential to be successful.

\paragraph{Paper structure}
After discussing related work (Section~\ref{section:related work}) and formally defining our model (Section~\ref{section:formal model}), we prove general observations that apply to any metric space (Section~\ref{section:general observations}).
Then, we consider deliberation in Euclidean spaces (Section~\ref{section:euclid}), in hypercubes (Section~\ref{section:hypercubes}), and in ordinal elections (Section~\ref{section:ordinal}), and conclude with model relaxations and other avenues for future research (Section~\ref{section:outlook}).
Our results are summarized in Table~\ref{table:results}.

\begin{table}[t]
\caption{Summary of our main results. For each model realization -- i.e., a metric space $(X, d)$ and a voting rule $\mathcal{R}$ -- we report whether convergence is guaranteed, and, if so, what is the upper bound of the number of iterations. VNW (Variable Number of Winners~\cite{faliszewski2020multiwinner}) stands for the set of all subsets of some underlying candidate set, and is modeled via hypercubes; MW (Multi-winner~\cite{mwchapter}) stands for the set of all $k$-size subsets of some underlying candidates set, and is modeled via subsets of hypercubes; and SWF (Social Welfare Functions~\cite{moulin2016handbook}) stands for the set of all rankings over some underlying candidate set.
\iffalse 
$w_0$ stands for the winner of the first iteration.
\fi
}
\label{table:results}
{
\begin{center}
	\centering
		\begin{tabular}{c c c | c c c} 
  $X$
& $d$
& $\mathcal{R}$
& Convergence
& Time
& Theorem
\\ \hline
  $\mathbb{R}^T$
& $\ell_1$
& Mean*
& \cmark
& $O\left(\max_{v \in V^0} \frac{d(v, w^0)}{\epsilon}\right)$
& \ref{theorem:mean_l1}
\\
  $\mathbb{R}^T$
& $\ell_2$
& Mean*
& \cmark
& $O\left(max_{v \in V^0} \frac{d(v, w^0)}{\epsilon}\right)$
& \ref{theorem:mean_l2}
\\
$\mathbb{R}^{\geq 3}$
& $\ell_{\infty}$
& Mean*
& \xmark
& \xmark
& \ref{example:bad}
\\
$\mathbb{R}^T$
& $\ell_1$, $\ell_2$
& Median*
& \cmark
& $max_{v \in V^0}\lceil\frac{d(v, w^0)}{\epsilon}\rceil$
& \ref{theorem:median}
\\
$\mathbb{R}^{\geq 3}$
& $\ell_{\infty}$
& Median*
& \xmark
& \xmark
& \ref{example:badtwo}
\\
 & & & &
\\
 VNW
& Hamming
& Majority
& \cmark
& $\max_{v \in V^0}\lceil\frac{d(v, w^0)}{\epsilon}\rceil$
& \ref{theorem:vnw}
\\
VNW
& First changed
& Monotonic
& \cmark
& $\lceil m/\epsilon \rceil$
& \ref{theorem:first changed}
\\
& & & &
\\
MW
& Hamming
& Majority
& \cmark
& $\max_{v \in V^0} \lceil\frac{d(v, w^0)}{\epsilon}\rceil$
& \ref{theorem:mw}
\\
MW
& First changed
& Monotonic
& \cmark
& $\lceil m/\epsilon \rceil$
& \ref{theorem:first changed}
\\
& & & &
\\
SWF
& Arbitrary
& Kemeny
& \cmark
& $max_{v \in V^0} \lceil\frac{d(v, w^0)}{\epsilon}\rceil$
& \ref{theorem:kemeny}
\\
SWF
& Swap
& Monotonic scoring 
& \cmark
& ?
& \ref{theorem:scoring}
\\
SWF
& Swap
& STV
& \cmark
& ?
& \ref{theorem:stv}
\\
SWF
& First changed
& Monotonic
& \cmark
& $\lceil m/\epsilon \rceil$
& \ref{theorem:swf first changed}
\\
\end{tabular}
\end{center}
}
\begin{tablenotes}
    \setlength{\itemindent}{15pt}
    \item[a] * Mean and median both being element-wise.
\end{tablenotes}
\end{table}

\section{Related Work}\label{section:related work}
The most relevant literature pointer to our work is the paper of Bulteau et al.~\cite{bulteau2021aggregation}, in which the authors study aggregation methods for metric spaces. In particular, their model includes a metric space $(X, d)$, where $X$ is the set of elements of the space and $d$ is a metric between pairs of elements of $X$; the opinion of an agent is an element $x \in X$ -- referred to as the agent's \emph{ideal point} -- and the distance~$d$ determines the ordinal preferences of an agent over all $X$, where an agent prefers elements that are closer to its ideal point; a voting rule in their framework is a function that takes $n$ points of $X$ and returns an aggregated point in the metric space as the winner of the election.
The jargon we use in this paper is largely due to Bulteau et al.; viewed from our angle, Bulteau et al.\ study a one-time aggregation process in which voters provide their ideal points and an aggregation method is used to find an aggregated point in the space, while we study an iterative process in which the aggregation method is used iteratively, each time causing the agents to move slightly closer to the aggregated point. Note that Bulteau et al.\ mention that their model may be indeed the basis for studying a process that combines voting and deliberation, as we set to do in the current paper. Indeed, we chose to build on the framework of metric aggregation as it evolves around a notion of distance between opinions (and it is general enough to capture many relevant social choice settings at once); this makes it natural for us to model the effect of deliberation by having each agent change its position to be slightly closer -- according to some distance function $d$ -- to the aggregated point, in each iteration.

There are other works that consider iterative deliberation processes:
  e.g., Fain et al.~\cite{fain2017sequential} consider a process in which, in each iteration, two agents negotiate and move slightly closer to each other's point in the space;
  Elkind et al.~\cite{elkind2021united} consider a process of deliberation in a metric space, concentrating on coalitions that may form around compromise points in the metric space;
  and Garg et al.~\cite{garg2019iterative} consider a model in which all agents are moving in the confined radius of a ball around their compromise point.
There are also works that consider deliberation and aim at capturing the internal mechanics of deliberation~\cite{austen2005deliberation,lizzeri2010sequential,austen2006deliberation}; we, however, similarly to Elkind et al.~\cite{elkind2021united}, abstract away the internal mechanism of deliberation and concentrate on the possibility of reaching consensus by deliberation.

We also mention work on opinion diffusion in social networks~\cite{grandi2017social,faliszewski2018opinion}, in which agents are connected via a social network that affects the opinions of neighbors of agents and thus are propagated throughout the network. Technically, our model can be seen as a model of opinion diffusion where the social network is a complete graph (while in standard opinion diffusion the graph is usually not complete), however we prefer to think about our model as a model of deliberation.
Furthermore, we mention work on iterative voting~\cite{meir2017iterative}, in which agents change their votes iteratively after seeing the current votes of other agents. Technically, our model can be also seen as a model of iterative voting where in each iteration all voters change their vote slightly closer to the current aggregated point (while in standard iterative voting, usually voters strategically change their vote), however, again, we prefer to think about our model as a model of deliberation.

\section{Formal Model}\label{section:formal model}

We describe our formal model, which is parameterized by a metric space $(X, d)$ and a voting rule $\mathcal{R}$. The first three ingredients of our model -- namely, the metric space, the agent population, and the voting rule -- are adapted from the model of Bulteau et al.~\cite{bulteau2021aggregation} -- while the discussion ingredient, which is the center of our work, is novel.

\paragraph{Metric Space}
Let $(X, d)$ be a metric space with $X$ being a set of elements in the metric space and $d : X \times X \to \mathbb{R}$ being a metric function, so that (1) $d$ is \textit{symmetric}, with $d(x, y) = d(y, x)$ for every pair $x, y \in X$, (2) $d$ is \textit{non-negative}, with $d(x, y) \geq 0$ and $d(x, y) = 0\leftrightarrow{x=y}$, and (3) $d$ satisfies the \textit{triangle inequality} i.e., $d(x, z) \leq d(x,y)+d(y,z)$ holds for all $x,y,z\in X$.

\paragraph{Agent Population}
Let $V = \{v_0, \ldots, v_{n-1}\}$ be an agent population. Each agent $v \in V$ is associated with its \emph{initial ideal point}, which is an element $x \in X$, understood as the element of $X$ that is most preferred by $v$. We denote the initial ideal point of agent $v_i$ by $v_i^0$.
(Note that, effectively, the metric space sets the agents' ballot type as well as a distance function between possible ballots.)

\paragraph{Voting rule}
For a metric space $(X, d)$, let $\mathcal{R}$ be a function that takes $n$ elements of $X$ and returns an element $w \in X$. We refer to a set of $n$ elements of $X$ as a \emph{profile} $V$ (of voter ballots) and write $\mathcal{R}(V) = w$ to denote that the \emph{$\mathcal{R}$-winner} of the election (with profile $V$) is $w$.

\paragraph{Deliberation}
We model deliberation as an iterative process, such that, in each iteration, the positions of the voters might change. Initially, the positions of the voters are given by their initial ideal points. Then, in each iteration, we apply the voting rule $\mathcal{R}$; consequently, all voters move slightly closer to the current $\mathcal{R}$-winner:
  specifically, denoting by $v_i^j$ the ideal point of voter $i$ at the beginning of the $j$th iteration (so, in particular, $v_i^0$ are the initial ideal points), and denoting by $V^j=\{v_0^j,\ldots,v_{n-1}^j\}$ and the $\mathcal{R}$-winner of the $j$th iteration by $w^j$ (i.e., $w^j$ is the result of applying $\mathcal{R}$ on $V^j$), we have the following constraints, for some value of $\epsilon$.\footnote{Indeed, for some sparse spaces these two constraints may not be always satisfiable, as agents moving towards the current winner may need to jump ``too far''. In the metric spaces we consider in this paper there is always at least a specific $\epsilon$ for which these constraints are indeed satisfiable.}
  
\begin{constraint}\label{constraint:one}
  $d(v_i^{j + 1}, w^j) = max(0, d(v_i^j, w^j) - \epsilon)\ $.
\end{constraint}

\begin{constraint}\label{constraint:two}
  $d(v_i^{j+1}, v_i^{j})=\epsilon$ unless $d(v_i^{j + 1}, w^j)=0$ and then $d(v_i^{j+1}, v_i^{j}) \leq \epsilon\ $.
\end{constraint}

That is, each voter moves an $\epsilon$-closer to the current winner (unless it is already at most an $\epsilon$-close to the current winner, in which case it moves to the winner itself); the second constraint is to make sure that voters do not ``jump around'' too arbitrarily.

We say that the iterative deliberation process \emph{converges} if all agents cease to move after some finite number of iterations; note that, when an iterative deliberation process converges all agents are in consensus.

We say that convergence is guaranteed for some metric space $(X, d)$ and voting rule $\mathcal{R}$ if all possible deliberation processes \emph{converge}, for every $\epsilon > 0$; note that in each iteration, the agents can have multiple options to move, sometimes even an infinite number of options, and there may be an infinite number of initial profiles, so there may be an infinite amount of different deliberation processes.

\begin{example}\label{example:one}
Let $(X, d)$ be with $X$ being $\mathbb{Z}$ and $d(x, y) = |x - y|$. Let $\mathcal{R}(V) = \sum_{v \in V} \lfloor v_i / n \rfloor$ (so the $\mathcal{R}$-winner is the average, rounded down). Let $V = \{v_0, v_1, v_2\}$ with $v_0^0 = 3$, $v_1^0 = 5$, and $v_2^0 = 8$. Let $\epsilon$ be $1$. Then, the iterative deliberation process proceeds as follows:
  (1) at the beginning of the first iteration, $v_0$ stands on $3$, $v_1$ on $5$, and $v_2$ on $8$. The $\mathcal{R}$ winner is $w := \lfloor (3 + 5 + 8) / 3 \rfloor = 5$. Now, each $v_i$ moves an $\epsilon$-closer to~$5$; 
  (2) at the beginning of the second iteration, $v_0$ stands on $4$, $v_1$ remains on $5$, and $v_2$ stands on $7$. The $\mathcal{R}$ winner is again $w := 5$;
  (3) at the beginning of the third iteration, $v_0$ stands on $5$, $v_1$ remains on $5$, and $v_2$ stands on $6$. The $\mathcal{R}$ winner is again $w := 5$;
  (4) at the beginning of the fourth iteration, $v_0$ stands on $5$, $v_1$ remains on $5$, and $v_2$ stands on $5$. The $\mathcal{R}$ winner is again $w := 5$.
In particular, for this example, the iterative deliberation process converges, as, after the fourth iteration, nobody would move. See Figure~\ref{figure:one}.
\end{example}

\begin{figure}[t]
  \begin{minipage}[c]{0.67\textwidth}
\includegraphics[width=7.5cm]{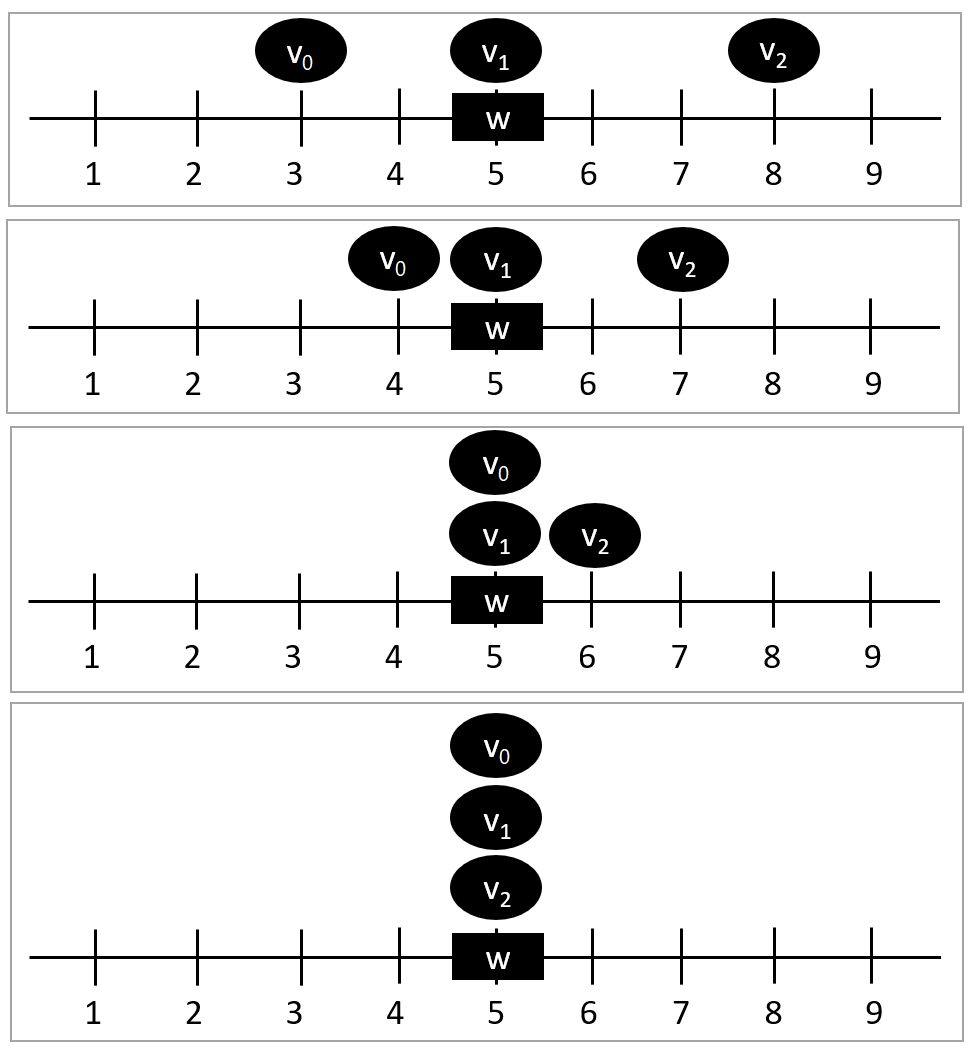}
  \end{minipage}\hfill
  \begin{minipage}[c]{0.3\textwidth}
    \caption{Illustration for Example~\ref{example:one}. The top box shows the initial configuration, with voter $v_0$ at $3$, voter $v_1$ at $5$, and voter $v_2$ at $8$, implying that the aggregated point $w$ is at $5$. The second box from the top depicts the situation at the second iteration, each box below shows the situation after another iteration, and the process converges at the fourth iteration, as shown in the bottom box, in which all voters are in consensus at $5$.} \label{figure:one}
  \end{minipage}
\end{figure}

\section{General Observations}\label{section:general observations}

We begin with some observations, regarding general sufficient conditions that guarantee convergence (later we will discuss specific metric spaces).

First, we observe that, whenever the process converges, it indeed converges to consensus.
This follows as, if the voters are not in consensus, then at least one voter would move in the current iteration.

\begin{observation}
  In a converged configuration, the profile is consensus.
\end{observation}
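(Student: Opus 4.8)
The plan is to prove the contrapositive: if the profile is not a consensus configuration, then the configuration is not converged, i.e., at least one agent moves in the current iteration. So suppose $V^j$ is not a consensus, meaning there exist at least two agents with distinct ideal points; in particular, not all agents coincide with the current $\mathcal{R}$-winner $w^j$. Pick any agent $v_i$ whose ideal point $v_i^j$ satisfies $v_i^j \neq w^j$; such an agent exists because if every agent were located at $w^j$ the profile would be a consensus.

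Next I would invoke Constraint~\ref{constraint:one} together with the metric axiom that $d(x,y)=0 \leftrightarrow x=y$. Since $v_i^j \neq w^j$, we have $d(v_i^j, w^j) > 0$. Constraint~\ref{constraint:one} then forces $d(v_i^{j+1}, w^j) = \max(0, d(v_i^j, w^j) - \epsilon)$, which is strictly smaller than $d(v_i^j, w^j)$ because $\epsilon > 0$. Hence $d(v_i^{j+1}, w^j) \neq d(v_i^j, w^j)$, so in particular $v_i^{j+1} \neq v_i^j$: agent $v_i$ has moved in iteration $j$. Therefore the configuration at the start of iteration $j$ is not converged, which is exactly the contrapositive of the claim.

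There is essentially no serious obstacle here; the only point requiring a little care is ensuring that ``not a consensus'' really does guarantee the existence of an agent strictly away from the current winner $w^j$, rather than merely two agents away from each other. This is immediate: if all agents were at $w^j$ they would all coincide, contradicting non-consensus, so some agent is at distance $>0$ from $w^j$. The rest is a direct application of Constraints~\ref{constraint:one} and the non-degeneracy axiom of the metric, so the proof should be a single short paragraph.
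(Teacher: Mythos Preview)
Your proposal is correct and follows exactly the same approach as the paper: the paper's entire justification is the one-line contrapositive remark that ``if the voters are not in consensus, then at least one voter would move in the current iteration,'' which is precisely what you spell out in detail using Constraint~\ref{constraint:one} and the non-degeneracy of the metric.
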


The next theorem roughly says that, if the winner does not move too much, then deliberation is guaranteed; the proof follows the intuition that, if all agents move $\epsilon$-closer to the winner, but the winner moves ``slower'' than this (in particular, by less than $\epsilon$ by a $\delta>0$, for if the demand was only less than $\epsilon$, we could have just approached convergence in the limit where the number of iterations approaches $\infty$), then eventually the process shall converge.
\begin{theorem}\label{theorem:one}
  Consider some $(X, d)$ and $\mathcal{R}$.
  If, for each profile $V$ and for each~$\epsilon$, there is an index $k$ such that for each $j \geq k$ there exists $\delta > 0$, and it holds that $d(w^j, w^{j+1}) \leq \epsilon - \delta$,
  then convergence is guaranteed.
\end{theorem}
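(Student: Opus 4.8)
The plan is to track, as a potential, the largest distance from any agent to the current winner, namely $\Phi^j := \max_{i} d(v_i^j, w^j)$, and to show that it reaches $0$ after finitely many iterations; once $\Phi^j = 0$ the whole profile sits on the winner, and (using that $\mathcal{R}$ is unanimous --- which all rules we consider in this paper satisfy) the winner no longer moves, so the process has converged. I read the hypothesis in the natural way: there are an index $k$ and a single $\delta > 0$ with $d(w^j, w^{j+1}) \le \epsilon - \delta$ for all $j \ge k$ --- a per-$j$ choice of $\delta$ would only amount to $d(w^j, w^{j+1}) < \epsilon$, which the paragraph preceding the theorem already flags as too weak.

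The first step is the one-line estimate that drives everything. Fixing $j \ge k$ and an agent $i$, the triangle inequality followed by Constraint~\ref{constraint:one} gives
\[
  d(v_i^{j+1}, w^{j+1}) \le d(v_i^{j+1}, w^j) + d(w^j, w^{j+1}) \le \max\!\bigl(0,\, d(v_i^j, w^j) - \epsilon\bigr) + (\epsilon - \delta).
\]
If $d(v_i^j, w^j) \ge \epsilon$ the right-hand side equals $d(v_i^j, w^j) - \delta$; otherwise it equals $\epsilon - \delta$. Taking the maximum over $i$ yields the recursion
\[
  \Phi^{j+1} \le \max\!\bigl(\Phi^j - \delta,\ \epsilon - \delta\bigr) \qquad (j \ge k).
\]

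From here the argument is elementary bookkeeping. Since $\Phi^k$ is a maximum of $n$ finite distances it is finite; and as long as $\Phi^j \ge \epsilon$ the recursion forces $\Phi^{j+1} \le \Phi^j - \delta$, a decrease by a fixed amount, so after at most $\lceil (\Phi^k - \epsilon)/\delta \rceil + 1$ further iterations we reach some $j_0 \ge k$ with $\Phi^{j_0} < \epsilon$ (and, since $\Phi^j < \epsilon$ forces $\Phi^{j+1} \le \epsilon - \delta < \epsilon$, the potential stays below $\epsilon$ forever after). At $j_0$ every agent is strictly within $\epsilon$ of $w^{j_0}$, so Constraint~\ref{constraint:one} forces $d(v_i^{j_0+1}, w^{j_0}) = 0$, i.e.\ every agent moves exactly onto $w^{j_0}$; thus $V^{j_0+1}$ is a consensus. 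By unanimity $w^{j_0+1} = w^{j_0}$, so no agent moves at step $j_0+1$, and by induction the configuration is frozen thereafter; the process has converged.

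I expect the only delicate point to be the passage through the regime $\Phi^j < \epsilon$: the recursion by itself does not push $\Phi^j$ all the way to $0$ (a priori it could stabilize at $\epsilon - \delta$), and it is Constraint~\ref{constraint:one} --- every agent within $\epsilon$ of the winner landing exactly on it --- together with unanimity of $\mathcal{R}$ that closes this gap. Unanimity is genuinely used: without it the winner could drift by up to $\epsilon - \delta$ at every step while all agents chase it, so a consensus profile need not be stationary. Choosing the potential $\Phi^j$ (distance to the current winner, rather than, say, pairwise distances between agents) and setting up the triangle-inequality estimate are the conceptual moves; everything else is routine.
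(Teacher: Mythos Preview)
Your proof is correct and shares the same backbone as the paper's --- the triangle inequality plus Constraint~\ref{constraint:one} plus the hypothesis $d(w^j,w^{j+1})\le\epsilon-\delta$ yields a potential that drops by at least $\delta$ per step --- but the two diverge in the choice of potential and in the endgame. The paper tracks the \emph{sum} $\sum_i d(v_i^j,w^j)$ and simply asserts it decreases by $\delta$ until it hits zero; you track the \emph{max} $\Phi^j=\max_i d(v_i^j,w^j)$, derive the cleaner recursion $\Phi^{j+1}\le\max(\Phi^j-\delta,\,\epsilon-\delta)$, and then argue separately that once $\Phi^{j_0}<\epsilon$ Constraint~\ref{constraint:one} snaps every agent onto $w^{j_0}$. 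Your route is more careful on two points the paper glosses over: first, the sum-potential argument is not literally a decrease by $\delta$ when some agents are already within $\epsilon$ of the winner (their contribution can jump from $0$ back up to $\epsilon-\delta$), whereas your max-potential sidesteps this; second, you make explicit the role of unanimity in freezing the consensus, which the paper's proof leaves implicit. Both arguments buy the same conclusion, but yours is the more rigorous of the two.
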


\begin{proof}
The following holds, by the triangle inequality:
  $\sum_i(d(v_i^{j+1},w^{j+1}))\leq \sum_i(d(v_i^{j+1},w^j) + d(w^j,w^{j+1}))$.
Using Constraint~\ref{constraint:one}, the theorem's assumption, and contraction of the $\epsilon$, it follows that $\sum_i(d(v_i^{j+1},w^{j+1}))\leq\sum_i(d(v_i^j,w^j))-\delta$.
Thus, the sum of all distances of all $v_i$ from $w$ is decreasing, by at least a $\delta$ in each iteration, from index $k$, until it reaches zero, in a maximum of $\sum_i(d(v_i^0,w^0))/\delta$ iterations, and convergence follows.
\qed\end{proof}

The next theorem deals with the time complexity of the iterative process and the winner of the last iteration.

\begin{theorem}\label{theorem:time}
  For any $(X, d)$ and $\mathcal{R}$ where the iterative process is such that for every profile $V$, the $\mathcal{R}$-winner of the profile reached in the next iteration $V'$ is equal to the $\mathcal{R}$-winner of $V$, and let $D=max_{v \in V_0} d(v, w^0)$, the maximal distance of any agent from the $\mathcal{R}$-winner of the initial state, $w^0$, then the number of iterations until convergence is reached is exactly $\lceil D/\epsilon \rceil$ and the $\mathcal{R}$-winner of the last iteration is $w^0$.
\end{theorem}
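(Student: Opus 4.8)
The plan is to first determine the trajectory of the winner, then the trajectory of each individual agent, and finally read off both the convergence time and the identity of the final winner. \emph{Step 1: the winner never moves.} The hypothesis says that for \emph{every} profile the $\mathcal{R}$-winner of its successor profile equals the $\mathcal{R}$-winner of that profile; applying this repeatedly along the run gives $w^{j+1}=w^j$ for all $j$, and hence, by a trivial induction, $w^j=w^0$ for every iteration $j$. In particular the $\mathcal{R}$-winner of the last iteration is $w^0$, which is the second claim. (Theorem~\ref{theorem:one} already applies here with $\delta=\epsilon$, so convergence is guaranteed; but it only yields the bound $\sum_i d(v_i^0,w^0)/\epsilon$, which is weaker than the claimed $\lceil D/\epsilon\rceil$, so the exact count is argued directly.)

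\emph{Step 2: each agent approaches $w^0$ at speed $\epsilon$.} Since $w^j=w^0$ for all $j$, Constraint~\ref{constraint:one} becomes, for each agent $i$, the one-dimensional recursion $d(v_i^{j+1},w^0)=\max(0,\ d(v_i^j,w^0)-\epsilon)$; unfolding it by induction on $j$ gives $d(v_i^j,w^0)=\max(0,\ d(v_i^0,w^0)-j\epsilon)$. Thus $d(v_i^j,w^0)=0$ exactly when $j\geq d(v_i^0,w^0)/\epsilon$, i.e.\ (as $j$ is an integer) exactly when $j\geq\lceil d(v_i^0,w^0)/\epsilon\rceil$; and once $d(v_i^j,w^0)=0$ we have $v_i^j=w^0$, so by Constraints~\ref{constraint:one} and~\ref{constraint:two} agent $i$ stays put forever after.

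\emph{Step 3: convergence time.} A configuration $V^j$ is converged precisely when every agent sits at $w^0$: if some $v_i$ has $d(v_i^j,w^0)>0$ then, since $\epsilon>0$, $\max(0,d(v_i^j,w^0)-\epsilon)<d(v_i^j,w^0)$, so that agent moves in iteration $j$ and $V^j$ is not converged; conversely if all agents are at $w^0$ nobody moves. By Step~2 the common location $w^0$ is reached by all agents exactly from iteration $\max_i\lceil d(v_i^0,w^0)/\epsilon\rceil=\lceil D/\epsilon\rceil$ onwards (recall $D=\max_{v\in V^0}d(v,w^0)$), and not earlier, since the agent attaining $D$ is still strictly away from $w^0$ at every iteration $j<\lceil D/\epsilon\rceil$. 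Hence the number of iterations until convergence is exactly $\lceil D/\epsilon\rceil$.

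The argument is essentially bookkeeping; the only points needing care are Step~1 — checking that "the winner is stable between consecutive profiles" genuinely propagates to $w^j=w^0$ for \emph{all} $j$ and not merely one step — and the lower-bound ("not earlier") direction in Step~3, which hinges on the furthest agent being strictly away from $w^0$ up through iteration $\lceil D/\epsilon\rceil-1$. Neither is deep.
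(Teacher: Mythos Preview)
Your proof is correct and follows essentially the same approach as the paper: use the hypothesis to conclude $w^j=w^0$ for all $j$, then track each agent's distance via Constraint~\ref{constraint:one} to read off the exact convergence time. Your write-up is in fact more careful than the paper's own proof, which omits the ceiling and the explicit lower-bound direction that you handle in Step~3.
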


\begin{proof}
  From Theorem~\ref{theorem:one} convergence is guaranteed, and since for any profile the $\mathcal{R}$-winner doesn't change, so the $\mathcal{R}$-winner in the last iteration must be $w_0$. Also, the voter which is the farthest from $w$, must reduce its distance in each iteration by $\epsilon$, so in each iteration the maximal distance reduces by $\epsilon$, so in exactly $D/\epsilon$ iterations, convergence shall be reached.
\qed\end{proof}

The next theorem roughly says that, if the agents move to the center of mass, then convergence is guaranteed.

\begin{theorem}\label{theorem:mass}
  For any $(X, d)$,
  if $\mathcal{R} = \argmin_{x \in X} \sum_{v \in V} d(v, x)$, then convergence is guaranteed.
\end{theorem}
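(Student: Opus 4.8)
The plan is to run a potential argument with the potential $\Phi_j := \sum_{i} d(v_i^j, w^j)$, i.e., the objective value achieved by the current $\mathcal{R}$-winner on the current profile $V^j$; this is finite at the start, since $V$ is finite and $d$ is $\mathbb{R}$-valued, so $\Phi_0 = \sum_i d(v_i^0,w^0)<\infty$. I want to show that in each step $\Phi_j$ either drops by at least $\epsilon$ or hits $0$.

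The one fact that does all the work is that $w^{j+1} = \mathcal{R}(V^{j+1})$ is, by definition of $\mathcal{R}$, a point of $X$ minimizing $x \mapsto \sum_i d(v_i^{j+1}, x)$; evaluating this objective at the (generally suboptimal) point $w^j \in X$ gives
$$\Phi_{j+1} \;=\; \sum_i d(v_i^{j+1}, w^{j+1}) \;\le\; \sum_i d(v_i^{j+1}, w^j) \;=\; \sum_i \max\!\big(0,\ d(v_i^j, w^j) - \epsilon\big),$$
the last equality being Constraint~\ref{constraint:one} applied termwise. Now split into cases. If some agent sits at distance at least $\epsilon$ from $w^j$, its term on the right decreases by exactly $\epsilon$ relative to $\Phi_j$ while every other term decreases by a nonnegative amount, so $\Phi_{j+1} \le \Phi_j - \epsilon$. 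Otherwise every agent is within distance $\epsilon$ of $w^j$, so Constraint~\ref{constraint:one} forces $d(v_i^{j+1}, w^j)=0$, i.e.\ $v_i^{j+1}=w^j$ for all $i$; then $V^{j+1}$ is the consensus profile located at $w^j$, whose sum-of-distances objective is minimized uniquely at $w^j$ (since $\sum_i d(w^j,x)=0$ iff $x=w^j$), hence $w^{j+1}=w^j$ and no agent moves in iteration $j+1$ or after — the process has converged, and $\Phi_{j+1}=0$.

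Putting the cases together: for every $j$, either $\Phi_{j+1}=0$ or $\Phi_{j+1}\le\Phi_j-\epsilon$. Since $\Phi_j\ge 0$ always, the second possibility can occur at most $\lceil\Phi_0/\epsilon\rceil$ times, after which $\Phi_j=0$; and $\Phi_j=0$ means $v_i^j=w^j$ for every $i$, a consensus profile which (by the previous paragraph) is a fixed point of the process. Hence the process converges for every profile and every $\epsilon>0$ — within $O\big(\sum_{v\in V^0}d(v,w^0)/\epsilon\big)$ iterations — which is the claim. One could alternatively phrase this as a verification of the hypothesis of Theorem~\ref{theorem:one} with the same $\Phi$, but the direct route avoids needing any bound on $d(w^j,w^{j+1})$.

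I expect the only real subtlety to be the ``endgame'' case — showing that once all agents lie within $\epsilon$ of the winner, one more iteration collapses everyone onto it and produces a genuine fixed point — since everything else reduces to the single inequality $\Phi_{j+1}\le\sum_i d(v_i^{j+1},w^j)$, which holds here precisely because $\mathcal{R}$ is the sum-of-distances minimizer (for an arbitrary rule there would be no such control on how much the new winner worsens the objective). Note that Constraint~\ref{constraint:two} is never used: the argument only appeals to the distances $d(v_i^{j+1},w^j)$ already pinned down by Constraint~\ref{constraint:one}, so it covers every admissible realization of the move step.
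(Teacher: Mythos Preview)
Your proof is correct, and it follows a genuinely different route from the paper's. The paper argues that the winner $w$ stays the same across iterations: it bounds $\sum_{v\in V'} d(v,w)$ exactly via Constraint~\ref{constraint:one}, bounds $\sum_{v\in V'} d(v,x)$ for any competitor $x$ via the triangle inequality together with Constraint~\ref{constraint:two}, and compares the two to conclude $w$ is still optimal; convergence then falls out of Theorem~\ref{theorem:one}. You instead run a direct potential argument on $\Phi_j=\sum_i d(v_i^j,w^j)$, using only the single inequality $\Phi_{j+1}\le \sum_i d(v_i^{j+1},w^j)$ that comes for free from the minimizing property of $\mathcal{R}$, and Constraint~\ref{constraint:one} to evaluate the right-hand side.

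What your approach buys: it is shorter, it never needs Constraint~\ref{constraint:two}, and it sidesteps the stronger intermediate claim that the winner is literally invariant (a claim the paper's computation handles somewhat loosely when some agents are already within $\epsilon$ of $w$). What the paper's approach buys: when it works, winner invariance feeds directly into Theorem~\ref{theorem:time}, pinning down both the exact iteration count $\lceil D/\epsilon\rceil$ and the identity of the final consensus point as $w^0$; your argument yields convergence with the slightly looser bound $O(\Phi_0/\epsilon)$ but says nothing about where the process lands.
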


\begin{proof}
We show that in all iterations, the winner stays the same, and therefore, by Theorem~\ref{theorem:one}, convergence follows.
Let $w$ denote the winner for some iteration $V$, so $w$ has the minimum sum of distances from the agents in that iteration - $w=\argmin_{x \in X} \sum_{v \in V} d(v, x)$. In the next iteration $V'$, suppose that some element $y$ is the winner - $y=\argmin_{x \in X} \sum_{v \in V'} d(v, x)$
Note that every agent got closer to $w$ by $\epsilon$, unless it was already less than $\epsilon$-far away from $w$, in which case it moved to $w$. We denote by $V_1$
the group of agents that were $\epsilon$ or more far away from $w$, and $V_2=V-V_1$. So, the new sum of distances from $w$ is $d_w=\sum_{v \in V'} d(v, w) = \sum_{v \in V_1}(d(v, w) - \epsilon) + \sum_{v \in V_2} d(v, w) - \sum_{v \in V_2} (v, w)$ where we replaced $V_1'$ by $V_1$ by reducing $\epsilon$, and just added and subtracted all elements in $V_2$. $d_w = \sum_{v in V} d(v, w) - n\epsilon - \sum_{v \in V_2} d(v, w)$, where we joined $V_1$ and $V_2$ into $V$.
$d_x=\sum_{v \in V'} d(v, x)$. But by the triangle inequality, for $v' \in V'$ and the corresponding $v \in V$, $d(x, v') + d(v, v') \geq d(x, v)$, and by Constraint~\ref{constraint:two} $d(v, v') \leq \epsilon$, so $d(x, v') \geq d(x, v) - \epsilon$.
Replacing that in the first equation, $d_x \geq \sum_{v \in V} (d(v, x) - \epsilon)=\sum_{v \in V} d(v, x) - n\epsilon$.
If we combine the two values, we get $d_x\geq d_w$, because the first sum is just the sum of distances in iteration $V$, where $w$ was the argmin. Thus, $x = w$ and the claim follows by Theorem~\ref{theorem:one}.
\qed\end{proof}

\section{Deliberation in Euclidean Spaces}\label{section:euclid}

In this section we consider Euclidean spaces; these are natural spaces that are studied extensively in social choice~\cite{arrow1990advances}.
Formally, we consider metric spaces $(X, d)$ in which $X$ is a $T$-dimensional\footnote{We use ``$T$'' and not the standard ``$d$'', as ``$d$'' is taken by the metric space $(X, d)$.} Euclidean space, $X = \mathbb{R}^T$, for some $T = \{1, 2, 3, \ldots\}$; as for the distance function $d$, we consider $\ell_p$ norms for $p \in \{1, 2, \infty\}$ (other distances are indeed possible, however here we concentrate on $\ell_p$ norms).
As for the voting rule $\mathcal{R}$, we consider the element-wise mean and median, as formally defined below.

\begin{definition}
    The \emph{element-wise mean} of $n$ points/voters $v_i$, $i \in [n]$, in some~$\mathbb{R}^T$, is a point in $\mathbb{R}^T$ such that the value in each dimension is the mean of the values of the voters in that dimension; that is, the value at dimension $t$, $t \in [T]$, is $\sum_{i \in [n]} v_i[t] /n$.
\end{definition}

\begin{definition}
    The \emph{element-wise median} of $n$ points/voters $v_i$, $i \in [n]$, in some~$\mathbb{R}^T$ is a point in $\mathbb{R}^T$ such that the value in each dimension is the median of the values of the voters in that dimension; that is, the value at dimension $t$, $t \in [T]$, is $median(v_i[t])$, where we define the median of an even number of real numbers to be the larger between the two middle numbers.
\end{definition}

First, we show that, for $\ell_1$, every coordinate of every agent moves closer to the winner and does not pass it.

\begin{lemma}\label{lemma:l1}
  Let $(X, d)$ be such that $X = \mathbb{R}^T$ and $d$ is $\ell_1$. Then for any agent $v$, element $i$ and iteration $j$, it holds that either $w[i] \leq v^{j+1}[i] \leq v^{j}[i]$ or $w[i] \geq v^{j+1}[i] \geq v^{j}[i]$ where $w$ is the winner of the $j$'s iteration.
\end{lemma}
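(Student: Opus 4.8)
The plan is to reduce the $\ell_1$ statement to a coordinate-wise triangle-inequality fact on the real line, and then to argue that the two movement constraints force that triangle inequality to be tight in every coordinate simultaneously, which pins down where $v^{j+1}[i]$ can lie.

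First I would fix the agent $v$, the iteration $j$, and write $w=w^j$. I would recall the elementary fact that for any three reals $a,b,c$ we have $|a-c|\le|a-b|+|b-c|$, with equality if and only if $b$ lies in the closed interval with endpoints $a$ and $c$. Applying this coordinate-wise with $a=v^{j}[i]$, $b=v^{j+1}[i]$, $c=w[i]$ and summing over the $T$ coordinates — using that the $\ell_1$ distance is the sum of coordinate-wise absolute differences — I get $d(v^{j},w)\le d(v^{j},v^{j+1})+d(v^{j+1},w)$, which is of course just the metric triangle inequality, but now carrying the per-coordinate equality bookkeeping along with it.

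Next I would show this inequality is in fact an equality. Split on $d(v^{j},w)$ versus $\epsilon$. If $d(v^{j},w)\le\epsilon$, then Constraint~\ref{constraint:one} gives $d(v^{j+1},w)=0$, hence $v^{j+1}=w$ since $d$ is a metric, so $v^{j+1}[i]=w[i]$ for every $i$ and the claimed dichotomy holds trivially (the middle term equals an endpoint); note this also covers the borderline $d(v^{j},w)=\epsilon$. Otherwise $d(v^{j},w)>\epsilon$, and then Constraint~\ref{constraint:one} gives $d(v^{j+1},w)=d(v^{j},w)-\epsilon>0$, so the exception in Constraint~\ref{constraint:two} does not apply and $d(v^{j},v^{j+1})=\epsilon$; adding these two equalities yields exactly $d(v^{j},w)$, forcing the chain of inequalities from the previous step to collapse into equalities.

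Finally, in the remaining case $d(v^{j},w)>\epsilon$, equality of the summed triangle inequality forces equality in each coordinate's triangle inequality (a sum of nonnegative slacks is zero only if each slack is zero), so by the equality characterization $v^{j+1}[i]$ lies in the closed interval between $v^{j}[i]$ and $w[i]$ for every $i$ — which is precisely the stated dichotomy $w[i]\le v^{j+1}[i]\le v^{j}[i]$ or $w[i]\ge v^{j+1}[i]\ge v^{j}[i]$. The only non-routine point, and the one I expect to be the main obstacle, is establishing this tightness: a priori the constraints bound only the aggregate $\ell_1$ distances, so one must rule out "wasteful" moves in which one coordinate overshoots $w$ while another drifts away yet the totals still obey the constraints; the key is that the step budget $d(v^{j},v^{j+1})=\epsilon$ is consumed exactly by the distance reduction $d(v^{j},w)-d(v^{j+1},w)=\epsilon$, leaving no slack to waste. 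Everything else is the standard real-line triangle inequality.
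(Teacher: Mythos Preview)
Your argument is correct and is essentially the same idea as the paper's: both exploit that in $\ell_1$ the triangle inequality decomposes into a sum of coordinate-wise triangle inequalities, so the constraints --- which make $d(v^j,v^{j+1})+d(v^{j+1},w)=d(v^j,w)$ exact --- leave no slack in any coordinate. The paper phrases this as a quick contradiction (``if a coordinate overshoots, the others must compensate and then $d(v^j,v^{j+1})>\epsilon$''), while you unfold the same equality argument directly and handle the $d(v^j,w)\le\epsilon$ case explicitly; your version is more carefully stated but not a different route.
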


\begin{proof}
  In $\ell_1$ the contribution of every coordinate is just added with absolute; thus, if $v^{j+1}[i]$ is not between $v^{j}[i]$ and $w[i]$, then the contribution of $v^{j+1}[i]$ to $d(v^{j+1}, w)$, will need to be compensated by the other coordinates to accommodate for the $\epsilon$ reduction, and then $d(v^{j}, v^{j+1})$ will be greater than $\epsilon$.
\qed\end{proof}

We use Lemma~\ref{lemma:l1} to show that convergence is guaranteed, for $\ell_1$, for element-wise mean.

\begin{theorem}\label{theorem:mean_l1}
  Let $(X, d)$ be such that $X = \mathbb{R}^T$ and $d$ is $\ell_1$, and let $\mathcal{R}$ be the element-wise mean. Then, convergence is guaranteed.
\end{theorem}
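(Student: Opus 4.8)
The plan is to find a potential function that strictly decreases by a bounded‑below amount in every non‑consensus iteration. The natural choice is $\Phi(V^j)=\sum_i d(v_i^j,w^j)$, the total distance of the agents from the current winner. Since for this rule $w^j$ is the element‑wise mean, $\Phi$ splits coordinate‑wise: $\Phi(V^j)=\sum_{t\in[T]}\sigma_t^j$, where $\sigma_t^j=\sum_i |v_i^j[t]-\mu_t^j|$ is the sum of absolute deviations of the $t$‑th coordinates of the agents from their mean $\mu_t^j=w^j[t]$. It therefore suffices to control each $\sigma_t$ separately.

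First I would fix a coordinate $t$ and set $\delta_i=v_i^{j+1}[t]-v_i^j[t]$. By Lemma~\ref{lemma:l1}, $v_i^{j+1}[t]$ lies between $v_i^j[t]$ and $\mu_t^j$ and does not pass $\mu_t^j$, so $|v_i^{j+1}[t]-\mu_t^j|=|v_i^j[t]-\mu_t^j|-|\delta_i|$; summing, the deviation of the \emph{new} $t$‑th coordinates from the \emph{old} mean $\mu_t^j$ equals $\sigma_t^j-\sum_i|\delta_i|$. Now $\sigma_t^{j+1}$ measures deviation from the \emph{new} mean $\mu_t^{j+1}$, so the key estimate is to bound the increase caused by moving the reference point from $\mu_t^j$ to $\mu_t^{j+1}$. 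Writing $g(c)=\sum_i|v_i^{j+1}[t]-c|$ — convex and piecewise linear, with one‑sided slope $\#\{i:v_i^{j+1}[t]<c\}-\#\{i:v_i^{j+1}[t]>c\}$ — and using that $\mu_t^{j+1}$ is the \emph{mean} of the new $t$‑th coordinates (hence some new coordinate is weakly above it and some weakly below), the slopes of $g$ on the segment between $\mu_t^j$ and $\mu_t^{j+1}$ have absolute value at most $n-2$; therefore $g(\mu_t^{j+1})-g(\mu_t^j)\le (n-2)\,|\mu_t^{j+1}-\mu_t^j|\le \tfrac{n-2}{n}\sum_i|\delta_i|$. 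Combining, $\sigma_t^{j+1}\le \sigma_t^j-\tfrac{2}{n}\sum_i|\delta_i|$.

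Summing over $t$ and using that $d$ is $\ell_1$ yields $\Phi(V^{j+1})\le\Phi(V^j)-\tfrac{2}{n}\sum_i d(v_i^j,v_i^{j+1})$. By Constraints~\ref{constraint:one} and~\ref{constraint:two}, $d(v_i^j,v_i^{j+1})=\min(\epsilon,d(v_i^j,w^j))$; so if some agent is farther than $\epsilon$ from $w^j$ then $\Phi$ drops by at least $2\epsilon/n$, and otherwise every agent moves exactly onto $w^j$ (Constraint~\ref{constraint:one}), producing a consensus in the next iteration. Since $\Phi\ge 0$, the first case can occur at most $n\Phi(V^0)/(2\epsilon)$ times, so every such process converges, after $O(n\Phi(V^0)/\epsilon)$ iterations. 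The crux is the per‑coordinate estimate: the naive bound $g(\mu_t^{j+1})-g(\mu_t^j)\le n|\mu_t^{j+1}-\mu_t^j|$ exactly cancels the $\sum_i|\delta_i|$ gained from moving toward the old mean and gives only $\sigma_t^{j+1}\le\sigma_t^j$; extracting a genuine, quantitative decrease requires exploiting that the new reference point is the mean of the new coordinates, which is what forces the slope bound $n-2$ rather than $n$ (the degenerate cases $n\le 2$ and coordinate ties only make $\sigma_t$ shrink faster, so they do not threaten the argument).
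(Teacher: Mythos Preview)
Your argument is correct and, while it ultimately tracks the same potential $\Phi(V^j)=\sum_i d(v_i^j,w^j)$ that underlies Theorem~\ref{theorem:one}, it reaches the conclusion by a genuinely different route than the paper. The paper argues qualitatively that, in each coordinate, voters on both sides of the mean pull in opposite directions, so the mean moves strictly less than~$\epsilon$; it then invokes Theorem~\ref{theorem:one}. Your proof instead works coordinate-wise on the mean absolute deviation $\sigma_t$, combining Lemma~\ref{lemma:l1} (no overshoot) with a convexity/slope estimate for $g(c)=\sum_i|v_i^{j+1}[t]-c|$ to obtain the quantitative inequality $\sigma_t^{j+1}\le \sigma_t^j-\tfrac{2}{n}\sum_i|\delta_i|$. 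This buys you an explicit per-step drop $2\epsilon/n$ and hence an explicit iteration bound, whereas the paper's ``strictly less than $\epsilon$'' does not by itself supply the uniform $\delta$ that Theorem~\ref{theorem:one} formally requires; in that sense your argument is tighter. The price is a somewhat weaker time bound ($O(n\Phi(V^0)/\epsilon)$) than what the paper records separately via Theorem~\ref{theorem:killone}.

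One small point to tighten: you justify the slope bound $n-2$ only by noting that $\mu_t^{j+1}$, being the mean of the new coordinates, has data weakly on both sides. To conclude that the slope is bounded by $n-2$ \emph{along the whole segment} between $\mu_t^j$ and $\mu_t^{j+1}$ you also need $\mu_t^j$ to lie in the range of the new coordinates; this is true (by Lemma~\ref{lemma:l1} every new coordinate stays on the same side of $\mu_t^j$ as the old one, and the old coordinates straddle their mean), but you should say it. Alternatively, you can avoid this entirely by using convexity of $g$: the subgradient inequality at $b=\mu_t^{j+1}$ gives $g(b)-g(a)\le s\,(b-a)$ for every $s\in\partial g(b)$, and since $b$ has data on both sides one can always pick $|s|\le n-2$.
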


\begin{proof}
  We know that each coordinate can only approach the mean. So each voter has an $\epsilon$ total distance to move to change the mean, because it is $\ell_1$. But in each dimension, some voters are larger than the mean, and some voters are smaller than it, so they will contribute contradictory values, and thus the mean will move strictly less than $\epsilon$, and by Theorem~\ref{theorem:one} convergence is guaranteed.
\qed\end{proof}

Next we show that, in the case of $\ell_2$, each voter, under our constraints, must move directly to the $\mathcal{R}$-winner in a straight line.

\begin{lemma}\label{lemma:l2}
  Let $(X, d)$ be such that $X = \mathbb{R}^T$ and $d$ is $\ell_2$, then for every agent $v$ and iteration $j$, there exists only one point $v^{j+1}$ that adheres to the constraints, and that point is an $\epsilon$ closer on a straight line from $v^{j}$ to the $\mathcal{R}$-winner of the $j$'s iteration.
\end{lemma}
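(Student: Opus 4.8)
The plan is to show that, in the Euclidean case, the two constraints force an \emph{equality} in the triangle inequality, and then to invoke the strict convexity of the $\ell_2$ norm (equivalently, the equality condition in Cauchy--Schwarz) to pin down $v^{j+1}$ uniquely. Fix the iteration $j$ and an agent, and abbreviate $v := v^{j}$, $w := w^{j}$, and $r := d(v,w) = \|v-w\|_2$.

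First I would dispose of the degenerate case $r \le \epsilon$. Here Constraint~\ref{constraint:one} reads $d(v^{j+1},w) = \max(0, r-\epsilon) = 0$, so $v^{j+1} = w$ is the only possibility; it is the endpoint of the segment from $v$ to $w$ (hence lies on that straight line), and $d(v^{j+1},v) = r \le \epsilon$ is consistent with Constraint~\ref{constraint:two}. So the lemma holds trivially in this case.

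The main case is $r > \epsilon$. Then Constraint~\ref{constraint:one} gives $d(v^{j+1},w) = r-\epsilon > 0$, so Constraint~\ref{constraint:two} gives $d(v^{j+1},v) = \epsilon$; adding these two equalities yields $d(v,v^{j+1}) + d(v^{j+1},w) = \epsilon + (r-\epsilon) = r = d(v,w)$, i.e.\ $v^{j+1}$ attains equality in the triangle inequality $d(v,w) \le d(v,v^{j+1}) + d(v^{j+1},w)$. Setting $a := v^{j+1}-v$ and $b := w - v^{j+1}$, we have $a+b = w-v$ and $\|a+b\|_2 = \|a\|_2 + \|b\|_2$. Squaring and using Cauchy--Schwarz, $\|a+b\|_2^2 = \|a\|_2^2 + 2\langle a,b\rangle + \|b\|_2^2 \le (\|a\|_2+\|b\|_2)^2$ with equality iff $\langle a,b\rangle = \|a\|_2\|b\|_2$, which for $\ell_2$ holds iff $a$ and $b$ are non-negative scalar multiples of one another. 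Hence $v^{j+1} = v + \lambda(w-v)$ for some $\lambda \in [0,1]$, and imposing $\|v^{j+1}-v\|_2 = \epsilon$ forces $\lambda = \epsilon/r$, so $v^{j+1} = v + \tfrac{\epsilon}{r}(w-v)$ is the unique candidate. Conversely this point plainly satisfies both constraints, which gives existence and finishes the argument.

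I expect the one genuinely load-bearing step to be the equality condition in the triangle inequality for $\ell_2$: it is precisely the equality case of Cauchy--Schwarz, and it is exactly what distinguishes $\ell_2$ from $\ell_1$ and $\ell_\infty$, where the set of admissible $v^{j+1}$ need not be a single point and the convergence behaviour differs. The remaining steps (the case split and the substitution for $\lambda$) are routine.
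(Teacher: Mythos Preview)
Your argument is correct and, in fact, cleaner than the paper's. The paper proves the same lemma by first applying an isometry (translation so that $w=0$ and rotation so that $v^{j}$ lies along a single coordinate axis), then expanding $\|v^{j}-v^{j+1}\|_2^2$ coordinate-wise and using the two constraints to force all off-axis coordinates of $v^{j+1}$ to vanish. Your route is the coordinate-free version of the same idea: you observe that the two constraints together force equality in the triangle inequality and then invoke the Cauchy--Schwarz equality case directly, which immediately places $v^{j+1}$ on the segment $[v,w]$. What the paper's approach buys is a completely explicit computation requiring no appeal to a named inequality; what yours buys is brevity, no need for a coordinate change, and an explicit treatment of the degenerate case $r\le\epsilon$ (which the paper glosses over). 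Both arguments hinge on the same fact---strict convexity of $\ell_2$---and you correctly identify this as the step that separates $\ell_2$ from $\ell_1$ and $\ell_\infty$.
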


\begin{proof}
  Let us assume, wlog., that $w=0$, and $v^{j}[i']=0$ for every $i'$ except $i$. We can do this by displacement and rotation to the $i$'s axis, both are $\ell_2$-invariant.
  We define $M=d(v^{j}, w)$. So we have both vectors sizes $v^{j}[i]=M$, $\sum_{i'}v^{j+1}[i']^2=(M-\epsilon)^2$, The second size being because of Constraint~\ref{constraint:one}.
  $d(v^{j},v^{j+1})^2=\sum_{i'}(v^{j}[i']-v^{j+1}[i'])^2=\sum_{i'}v^{j}[i']^2+\sum_{i'}v^{j+1}[i']^2-2\sum_{i'}v^{j}[i']\cdot v^{j+1}[i']=M^2-(M-\epsilon)^2-2v^{j}[i]\cdot v^{j+1}[i]=2M^2+\epsilon^2-2M\cdot\epsilon-2v^{j}[i]\cdot v^{j+1}[i]=\epsilon^2$.

  Where we used our assumptions, vector sizes, and the fact that the distance between $v^{j}$ and $v^{j+1}$ is $\epsilon$.
  We can write $v^{j}[i]\cdot v^{j+1}[i]=M^2-M\cdot\epsilon=M\cdot(M-\epsilon)=v^{j}[i]\cdot\sqrt{\sum_{i'} v^{j+1}[i']^2 }$. Or, $\sum_{i'} v^{j+1}[i']^2=v^{j+1}[i]^2$. Or, if we just subtract $v^{j+1}[i]^2$, then all coordinates except $i$ in $v^{j+1}$ are zero. Now, $d(v^{j+1},w)=d(v^{j},w)-\epsilon$. So, $v^{j+1}[i]=v^{j}[i]-\epsilon$.
\qed\end{proof}

We use Lemma~\ref{lemma:l2} to show that, with $\ell_2$ and the element wise mean, convergence is guaranteed.

\begin{theorem}\label{theorem:mean_l2}
  Let $(X, d)$ be such that $X = \mathbb{R}^T$ and $d$ is $\ell_2$, and let $\mathcal{R}$ be the element-wise mean. Then, convergence is guaranteed.
\end{theorem}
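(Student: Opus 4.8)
The plan is to reduce the $\ell_2$ case to an application of Theorem~\ref{theorem:one}, exactly as in the $\ell_1$ proof of Theorem~\ref{theorem:mean_l1}, but using Lemma~\ref{lemma:l2} in place of Lemma~\ref{lemma:l1}. First I would invoke Lemma~\ref{lemma:l2} to observe that, in iteration $j$ with $\mathcal{R}$-winner $w^j$, every agent $v_i$ moves to the \emph{unique} point $v_i^{j+1}$ lying on the segment from $v_i^j$ to $w^j$ at distance $\max(0, d(v_i^j,w^j)-\epsilon)$ from $w^j$. In vector notation, writing $u_i = v_i^j - w^j$, we have $v_i^{j+1} - w^j = \lambda_i u_i$ with $\lambda_i = \max(0, \|u_i\| - \epsilon)/\|u_i\|$ (and $v_i^{j+1} = w^j$ if $u_i = 0$); in particular each agent moves by a vector that is an antiparallel shrink of $u_i$.

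Next I would compute the displacement of the element-wise mean. Since the element-wise mean of a set of points in $\mathbb{R}^T$ is simply their centroid $\bar v = \frac1n\sum_i v_i$, we have $w^{j+1} - w^j = \frac1n\sum_i (v_i^{j+1} - w^j) = \frac1n\sum_i \lambda_i u_i$. The key point is that $w^j$ is itself the centroid of the $v_i^j$, i.e. $\sum_i u_i = 0$. I want to bound $\|w^{j+1}-w^j\| = \frac1n\|\sum_i \lambda_i u_i\|$. Using $\sum_i u_i = 0$ we can write $\sum_i \lambda_i u_i = \sum_i \lambda_i u_i - \lambda_{\max}\sum_i u_i = \sum_i(\lambda_i - \lambda_{\max})u_i$ for any constant $\lambda_{\max}$; choosing $\lambda_{\max} = \max_i \lambda_i < 1$ (assuming not everyone has already reached $w^j$), each $\lambda_i - \lambda_{\max} \le 0$, and one obtains a bound of the form $\|w^{j+1}-w^j\| \le \frac1n\sum_i(\lambda_{\max}-\lambda_i)\|u_i\|$, which after substituting $\lambda_i\|u_i\| = \max(0,\|u_i\|-\epsilon)$ should simplify to something strictly below $\epsilon$ as long as the agents are not all collinear-on-one-side of $w^j$; more robustly, I expect the cleanest route is to show directly that $\frac1n\|\sum_i(v_i^{j+1}-w^j)\| \le \frac1n\sum_i\|v_i^{j+1}-w^j\| = \frac1n\sum_i\max(0,\|u_i\|-\epsilon) \le \big(\frac1n\sum_i\|u_i\|\big) - \frac{n'}{n}\epsilon$, where $n'\ge 1$ is the number of agents still at distance $\ge \epsilon$; so the mean moves by at most $\frac1n\sum_i\|u_i\| - \frac{\epsilon}{n}$ while it would need to move by more than that to fail to contract — this is exactly the $\epsilon - \delta$ hypothesis of Theorem~\ref{theorem:one} with $\delta = \epsilon/n$ once one also notes $\frac1n\sum_i\|u_i\|$ is itself shrinking.

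The main obstacle is making the $\delta>0$ gap in Theorem~\ref{theorem:one} genuinely uniform. The naive triangle-inequality bound $\|w^{j+1}-w^j\|\le \frac1n\sum_i\|v_i^{j+1}-w^j\|$ compares the winner's motion to the \emph{average} remaining distance, not to $\epsilon$, so I must argue that the relevant quantity — the total or average distance of agents to the current winner — actually decreases by a fixed amount each iteration, which is precisely what Theorem~\ref{theorem:one}'s conclusion packages once the hypothesis is verified. The technical care needed is the cancellation step exploiting $\sum_i u_i = 0$ (the centroid property), since this is what forces the mean to move strictly slower than $\epsilon$ rather than merely at most $\epsilon$; once that strict inequality is pinned down with an explicit $\delta$ (e.g.\ $\delta=\epsilon/n$, valid as long as at least one agent is more than $\epsilon$ from $w^j$, which holds until convergence), Theorem~\ref{theorem:one} delivers convergence and we are done. \qed
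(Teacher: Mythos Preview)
Your approach is essentially the paper's: invoke Lemma~\ref{lemma:l2} so that each agent's displacement is a vector of length at most $\epsilon$ pointing from that agent toward the current mean, use the centroid identity $\sum_i u_i=0$ to argue these displacement directions cannot all coincide, conclude that the mean moves strictly less than $\epsilon$, and apply Theorem~\ref{theorem:one}. The paper phrases the cancellation step geometrically (``the contribution of $v$ is just $\epsilon/n$ in the direction from $v$ to the mean'' and ``it cannot be that all those vectors are in the same direction'') and stops at the bare strict inequality; your algebraic rewriting and your explicit worry about extracting a \emph{uniform} $\delta$ actually go beyond what the paper spells out, since the paper never isolates an explicit $\delta$ either before invoking Theorem~\ref{theorem:one}.
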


\begin{proof}
  If we look in a coordinate system in which only one axis aligns with the vector between each $v$ and the mean (rotation is invariant to $\ell_2$), then we can see that the contribution of $v$ is just $\epsilon / n$ in the direction from $v$ to the mean, by Lemma~\ref{lemma:l2}, because only one axis has a delta which is not 0. Because it cannot be that all those vectors are in the same direction (in that case, the mean would be closer in the opposite direction until it passed one of them), it follows that the mean moves strictly less than $\epsilon$; by Theorem~\ref{theorem:one}, convergence follows.
\qed\end{proof}

Next, we look at the time order and the last converged winner in $\ell_1$ and $\ell_2$ in element wise-mean, and we show that if we look at the smallest T-dimensional ball that contains all the agents, then the time order is of the diameter of that ball divided by $\epsilon$, and the converged winner is inside that ball.

\begin{theorem}\label{theorem:killone}
  Let $(X, d)$ be such that $X = \mathbb{R}^T$ and $d$ is $\ell_1$ or $\ell_2$, and let $\mathcal{R}$ be the element-wise mean, and let $D$ be the diameter of the smallest $T$-dimensional ball that contains all $n$ agents. Then, the process would converge to a point inside that ball in $O(D/\epsilon)$ iterations.
\end{theorem}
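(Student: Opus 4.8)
\medskip

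The statement has two essentially independent halves --- that the run never leaves a region of diameter $\Theta(D)$, and that this happens within $O(D/\epsilon)$ iterations --- and I would prove them in that order. For the containment, the plan is to show by induction on $j$ that $V^j$ lies inside a fixed convex region determined by $V^0$. In the $\ell_2$ case this region is $B$ itself: assuming $V^j\subseteq B$, the winner $w^j$ is the element-wise mean, hence the centroid of $V^j$, hence a point of $\mathrm{conv}(V^j)\subseteq B$; and by Lemma~\ref{lemma:l2} each voter travels along the straight segment from $v_i^j$ to $w^j$, which is contained in the convex set $B$, so $V^{j+1}\subseteq B$. Consequently $w^j\in B$ for every $j$, and since $B$ is closed the consensus point $c=\lim_j w^j$ lies in $B$ as well. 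For $\ell_1$ one modification is needed: by Lemma~\ref{lemma:l1} a voter need only move coordinate-wise monotonically toward $w^j$, so it can land at a corner of the box spanned by $v_i^j$ and $w^j$ and thereby leave a Euclidean ball; the correct invariant region is then the smallest enclosing axis-parallel box $B'$ of $V^0$, which contains the box spanned by any two of its points and contains $\mathrm{conv}(V^j)\ni w^j$, and whose $\ell_1$-diameter is $\Theta(D)$ up to a factor depending only on $T$.

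For the iteration count, define the potential $M^j:=\max_{v\in V^j}d(v,w^j)$; by the containment, $M^0=O(D)$. One checks immediately that $M^j$ is non-increasing: after its move every voter's distance to $w^j$ is at most $\max(0,M^j-\epsilon)$, and $w^j$ itself moves by at most $\epsilon$ (the argument behind Theorem~\ref{theorem:one}), so $M^{j+1}\le(M^j-\epsilon)+\epsilon=M^j$. The substance is to show that $M^j$ actually drops by a fixed positive fraction of $\epsilon$ per iteration, apart from $O(1)$ final iterations in which $M^j$ is already of order $\epsilon$ and one extra step snaps everyone onto the winner; this yields $O(D/\epsilon)$. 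The mechanism: with $v^\star$ a voter realising $M^j$, the triangle inequality above is tight only if $w^j$ drifts nearly $\epsilon$ straight away from $v^\star$, and that is forbidden because $w^j$ is the centroid. Concretely, writing $u_i^j$ for the unit vector from $v_i^j$ to $w^j$ (well defined in $\ell_2$ by Lemma~\ref{lemma:l2}), the centroid identity $\sum_i d(v_i^j,w^j)\,u_i^j=0$ gives the exact displacement $w^{j+1}-w^j=-\tfrac1n\sum_{i:\,d(v_i^j,w^j)\ge\epsilon}(d(v_i^j,w^j)-\epsilon)\,u_i^j$, so the voters on the far side of $w^j$ from $v^\star$ pull $w^{j+1}$ back toward $v^\star$; quantifying this caps the $u_{v^\star}^j$-component of $w^{j+1}-w^j$ strictly below $\epsilon$, and combining it with $\|w^{j+1}-w^j\|\le\epsilon$ to control the orthogonal part gives $d(v^\star, w^{j+1})\le M^j-\Omega(\epsilon)$ as long as $M^j$ exceeds a suitable multiple of $\epsilon$. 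For $\ell_1$ the same argument runs coordinate by coordinate: in any coordinate carrying a constant fraction of $M^j$, some voters lie above and some below $w^j$, so that coordinate of the winner moves strictly less than $\epsilon$ while $v^\star$'s coordinate moves $\epsilon$ toward it (Lemma~\ref{lemma:l1}).

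The main obstacle is precisely this last quantitative step: upgrading ``the winner's step cannot be both parallel to and as long as the farthest voter's step'' to ``the deficit is at least $c\,\epsilon$'' with an absolute constant $c$, uniformly over all profiles and all $\epsilon$, and arguing that it is needed only while $M^j$ is bounded away from $0$. Three features make it delicate: (i) voters within distance $\epsilon$ of $w^j$ snap and contribute sub-$\epsilon$, possibly misaligned, displacements to the new centroid; (ii) the orthogonal component of $w^{j+1}-w^j$ could in principle re-inflate the distance, and a different voter could become the farthest one; and (iii) in $\ell_1$ the per-coordinate motion is monotone but not proportional, so the coordinate estimates must be recombined. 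In each case the centroid (respectively coordinate-balance) identity is what supplies the slack, but pinning the constant down is where the real effort goes; should a clean absolute constant prove elusive, the same scheme still delivers the weaker bound $O(nD/\epsilon)$, which one could fall back on while recording that the sharper $O(D/\epsilon)$ holds for the natural configurations.
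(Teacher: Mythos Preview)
Your plan is sound and runs parallel to the paper's, but with a different potential. The paper tracks the diameter of the smallest enclosing ball at each iteration and asserts that, because every agent moves $\epsilon$ toward the mean while the mean (being strictly interior) moves strictly less than $\epsilon$, this diameter shrinks by $\Omega(\epsilon)$ per step; it then reads off the $O(D/\epsilon)$ bound and the containment of the limit in the original ball from the fact that the mean stays inside at every step. You instead track $M^j=\max_{v\in V^j}d(v,w^j)$ and argue the same $\Omega(\epsilon)$ decrement directly for that quantity via the centroid identity. Both potentials are equivalent up to constants once containment is established, so the two arguments are really the same skeleton with a different bookkeeping choice.

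Where you differ usefully from the paper is in honesty about the quantitative step. The paper simply asserts ``diameter drops by $\Omega(\epsilon)$'' from ``mean moves strictly less than $\epsilon$,'' which is exactly the upgrade you flag as the real obstacle; your discussion of the centroid identity, the snapping voters, and the orthogonal component is more than the paper itself supplies. Your $\ell_1$ modification (axis-parallel box instead of a ball) is also more careful than the paper, which does not distinguish the two norms for the containment claim. So: same strategy, slightly different potential, and a more scrupulous treatment of the one genuinely nontrivial estimate.
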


\begin{proof}
In every iteration, the mean cannot be on the circumference of the ball, as at least one agent has to be on either side of the ball, otherwise the ball would be smaller (so the mean has a contribution from the agent from the other side, that keeps it away from the circumference of the agent on either side).
Furthermore, as each agent reduces its distance to the mean by~$\epsilon$, and as the proofs of Theorems~\ref{theorem:mean_l1} and~\ref{theorem:mean_l2} show that the mean moves strictly less than $\epsilon$ in both cases, it means that the diameter of the ball must reduce by $\Omega(\epsilon)$ in each iteration. And so, in $O(D/\epsilon)$ iterations, the diameter would reach zero, and convergence would be reached. Also, because the mean is inside the ball in each iteration, and the agents move closer to the mean, by Lemma~\ref{lemma:l1} and Lemma~\ref{lemma:l2}, the next mean would stay in the ball, implying that also the final mean would be inside the ball.
\qed\end{proof}

Now we show, using both Lemma~\ref{lemma:l1} and Lemma~\ref{lemma:l2}, that for both $\ell_1$ and $\ell_2$, in the case of element-wise median, convergence is guaranteed.

\begin{theorem}\label{theorem:median}
  Let $(X, d)$ be such that $X = \mathbb{R}^T$ and $d$ is $\ell_1$ or $\ell_2$, and let $\mathcal{R}$ be the element-wise median. Then, convergence is guaranteed.
\end{theorem}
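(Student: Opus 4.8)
The plan is to prove that, for both $\ell_1$ and $\ell_2$, the element-wise median is \emph{invariant} under a single deliberation step, i.e.\ $w^{j+1}=w^j$ for every reachable profile, and then to invoke Theorem~\ref{theorem:time} (and hence Theorem~\ref{theorem:one}) to conclude that convergence is guaranteed, in exactly $\lceil D/\epsilon\rceil$ iterations with $D=\max_{v\in V^0}d(v,w^0)$.

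First I would extract from the two lemmas a common ``no-overshoot, coordinate-wise monotone'' property: for every coordinate $i$, every agent $v$, and every valid move, $v^{j+1}[i]$ lies in the closed interval with endpoints $v^{j}[i]$ and $w[i]$. For $\ell_1$ this is precisely Lemma~\ref{lemma:l1}. For $\ell_2$ it follows in one line from Lemma~\ref{lemma:l2}: when $d(v^{j},w)>\epsilon$ the unique valid move is $v^{j+1}=v^{j}+\tfrac{\epsilon}{d(v^{j},w)}\,(w-v^{j})$, which in every coordinate is a convex combination of $v^{j}[i]$ and $w[i]$; and when $d(v^{j},w)\le\epsilon$ we have $v^{j+1}=w$, so every coordinate equals $w[i]$.

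Next, fix a coordinate $i$ and set $m=w[i]$. Since $w$ is the element-wise median, $m$ is the $k$-th smallest among the values $\{v^{j}_\ell[i]\}_\ell$ for $k=\lceil (n+1)/2\rceil$; this single index captures both the odd-$n$ case and the ``larger of the two middle'' convention for even $n$. Equivalently, $|\{\ell: v^{j}_\ell[i]<m\}|\le k-1$ and $|\{\ell: v^{j}_\ell[i]\le m\}|\ge k$. By the no-overshoot property, any agent with $v^{j}_\ell[i]\ge m$ keeps $v^{j+1}_\ell[i]\ge m$, and any agent with $v^{j}_\ell[i]\le m$ keeps $v^{j+1}_\ell[i]\le m$; hence the strict-below set can only shrink and the weak-below set can only grow, so both inequalities persist after the step and $m$ is still the $k$-th smallest value in coordinate $i$. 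As this holds for every coordinate, $w^{j+1}=w^j$.

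Because the argument uses only the no-overshoot property, which holds for \emph{any} valid move (not just a deterministic one — relevant for the $\ell_1$ case), it applies to every reachable profile, so the hypothesis of Theorem~\ref{theorem:time} is satisfied and the result follows. The only delicate point is the bookkeeping around the median definition, in particular the even-$n$ tie-break, together with noting that the straight-line step of Lemma~\ref{lemma:l2} genuinely moves each coordinate monotonically toward $w[i]$; beyond that the proof is a short counting argument and I do not anticipate a real obstacle.
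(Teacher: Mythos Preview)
Your proposal is correct and follows essentially the same route as the paper: use Lemma~\ref{lemma:l1} (for $\ell_1$) and Lemma~\ref{lemma:l2} (for $\ell_2$) to obtain the coordinate-wise no-overshoot property, deduce that the element-wise median is invariant across iterations, and then appeal to the fixed-winner convergence result. The paper's own proof is terser---it simply asserts that ``the median in each coordinate does not change'' and invokes Theorem~\ref{theorem:one}---whereas you supply the order-statistic counting argument and handle the even-$n$ tie-break explicitly, but these are elaborations of the same idea rather than a different approach.
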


\begin{proof}
  In $\ell_1$, by Lemma~\ref{lemma:l1}, each coordinate stays between the $\mathcal{R}$-winner and the last coordinate, and in $\ell_2$, by Lemma~\ref{lemma:l2}, also each coordinate stays between the $\mathcal{R}$-winner and the last coordinate, so the median in each coordinate does not change, and because $d$ is the element-wise median, the $\mathcal{R}$-winner does not change, and by Theorem~\ref{theorem:one}, convergence is guaranteed.
\qed\end{proof}

The time order and final winner are known from Theorem~\ref{theorem:time}.

\begin{corollary}
  Let $(X, d)$ be such that $X = \mathbb{R}^T$ and $d$ is $\ell_1$ or $\ell_2$, and let $\mathcal{R}$ be the element-wise median, and let $D=max_{v \in V^0} d(v, w^0)$, then the number of iterations until convergence is reached is exactly $\lceil D/\epsilon \rceil$ and the $\mathcal{R}$-winner of the last iteration is $w^0$.
\end{corollary}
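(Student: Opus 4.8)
The plan is to obtain this as an immediate corollary of Theorem~\ref{theorem:time} once we have verified its hypothesis for the element-wise median. Recall that Theorem~\ref{theorem:time} states: if for every profile $V$ arising in the process the $\mathcal{R}$-winner of the next profile $V'$ equals the $\mathcal{R}$-winner of $V$, then convergence is reached in exactly $\lceil D/\epsilon\rceil$ iterations and the $\mathcal{R}$-winner of the last iteration is $w^0$. So it suffices to show that the element-wise median winner is invariant from one iteration to the next, for both $\ell_1$ and $\ell_2$.

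First I would invoke Lemma~\ref{lemma:l1} (for $\ell_1$) and Lemma~\ref{lemma:l2} (for $\ell_2$): in any iteration $j$, for every agent $v$ and coordinate $i$, the new value $v^{j+1}[i]$ lies weakly between $v^{j}[i]$ and $w^{j}[i]$, on the same side of $w^{j}[i]$ as $v^{j}[i]$. Since $w^{j}[i]$ is, by definition of $\mathcal{R}$, the coordinate-wise median of the values $v^{j}[i]$, this says no coordinate value ever crosses the current median: an agent strictly above (respectively, strictly below, respectively, equal to) $w^{j}[i]$ in coordinate $i$ stays strictly above (respectively, strictly below, respectively, equal to) $w^{j}[i]$ after moving. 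Hence the partition of the $n$ values in coordinate $i$ into ``above / equal to / below the median'' is preserved, which pins down the order statistic that defines the median — including the even-$n$ convention of taking the larger of the two middle values. Therefore $w^{j+1}[i]=w^{j}[i]$ for every coordinate $i$, i.e.\ $w^{j+1}=w^{j}$. As this argument applies to the transition out of any profile encountered along the process, the hypothesis of Theorem~\ref{theorem:time} holds, and applying that theorem gives both claims: convergence in exactly $\lceil D/\epsilon\rceil$ iterations and last-iteration winner $w^0$.

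The only delicate point is the even-$n$ median convention: one must be sure that reshuffling among agents on the same side of the median cannot move the chosen middle value. This is exactly what the ``no value crosses $w^{j}[i]$'' observation rules out, so the argument goes through; the rest is routine bookkeeping on top of Lemmas~\ref{lemma:l1} and~\ref{lemma:l2} and Theorem~\ref{theorem:time}.
\qed
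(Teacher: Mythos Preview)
Your approach is the paper's: the corollary is stated there without its own proof, just the sentence ``The time order and final winner are known from Theorem~\ref{theorem:time}'', relying on Theorem~\ref{theorem:median} (whose proof you essentially reproduce) to supply the hypothesis that the winner never changes. One small slip worth fixing: it is not true that an agent strictly above $w^{j}[i]$ stays strictly above --- by Lemma~\ref{lemma:l1}/\ref{lemma:l2} it may land exactly on $w^{j}[i]$ --- so the three-way partition into ``above / equal / below'' is not literally preserved. This does not harm the conclusion, since your earlier (correct) observation that no coordinate value \emph{crosses} $w^{j}[i]$ already suffices to keep the relevant order statistic fixed.
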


In contrast to the results above, below we show that, for $\ell_{\infty}$, for the element-wise mean and element wise median, convergence is not guaranteed, at least for $T$-dimensional Euclidean spaces with $T \geq 3$. We show this by two counter examples.

The first example deals with element-wise mean.

\begin{example}\label{example:bad}
Set $\epsilon=1$, $n=3$, 
$v^0_0=(-4,2,2)$,
$v^0_1=(2,-4,2)$,
$v^0_2=(2,2,-4)$.
Running the iterative process for these initial conditions would result in adding~$1$ to each of the voters, in each dimension, in each step of the process.
I.e.:
  $v_0^1=(-3,3,3)$,
  $v_1^1=(3,-3,3)$,
  $v_2^1=(3,3,-3)$;
and generally:
  $v_0^j=(-4+j,2+j,2+j)$, $v_1^j=(2+j,-4+j,2+j)$, $v_2^j=(2+j,2+j,-4+j)$.
Indeed, this is an endless behavior, exploding to infinity.
It is possible to adapt both this example and the following one to any dimension $T > 3$, by adding as many zero dimensions as needed.
\end{example}

The second example deals with element-wise median.

\begin{example}\label{example:badtwo}
  The same pattern repeats for the element wise median
  $v_0^0=(0,0,0)$, $v_1^0=(-2,0,0)$, $v_2^0=(0,-2,0)$, $v_3^0=(0,0,-2)$.
  $v_0^{j+1}=(j,j,j)$, $v_1^{j+1}=(-1+j,1+j,1+j)$, $v_2^{j+1}=(1+j,-1+j,1+j)$, $v_3^{j+1}=(1+j,1+j,-1+j)$. As we defined the median to be the larger number when there is an even number of agents, the median increases by $1$ in every dimension in every iteration.
\end{example}

\begin{remark}
Generally speaking, non-convergence can be due to two possibilities:
  (1) Getting stuck in a cycle;
  or
  (2) moving to infinity.
Note that Example~\ref{example:bad} and Example~\ref{example:badtwo} are of the second type, which, in a way, is more dramatic; and, perhaps, less intuitive.
\end{remark}

\section{Deliberation in Hypercubes}\label{section:hypercubes}

Next we consider $T$-dimensional hypercubes; these spaces naturally correspond to multiple referenda~\cite{ma-comsoc} as well as to multiwinner elections~\cite{mwchapter} and committee selection with variable number of winners~\cite{faliszewski2020multiwinner}; below, we consider the latter two settings separately (indeed, for convenience, we use the jargon of multiwinner elections). We consider approval ballots here (in the next section we consider ordinal elections).

An important point to make, to all the discrete metric spaces we look at, is that we consider only $\epsilon \in \mathcal{N}$, because otherwise our metric spaces would be too sparse, and our constraints could not be met.

\subsection{Committee Elections with Variable Number of Winners}

The social choice setting here consists of a set of candidates and a set of agents such that each agent provides a subset of the candidates; then, a subset of the candidates -- without restrictions on its size -- is to be selected as the winner of the election.
This setting is studied under the umbrella of committee elections with variable number of winners~\cite{faliszewski2020multiwinner}; following the literature, we refer to this setting as \emph{VNW}.

Formally, we have a metric space $(X, d)$, where $X = \{0, 1\}^m$ for some integer~$m$ and $d(u, v)$ is the Hamming distance.
An important class of VNW rules are monotonic rules, as defined next.

\begin{definition}
  A VNW rule $\mathcal{R}$ is \emph{monotonic} if the following holds:
    for each profile $V$ and its $\mathcal{R}$-winner $w$, it holds that the $\mathcal{R}$-winner $w'$ for $V'$, where $V'$ is similar to $V$ except for one agent that either (1) flips some $1$ to $0$ for some candidate not in $w$; or (2) flips some $0$ to $1$ for some candidate in $w$; then $w' = w$ (i.e., $w$ stays).
\end{definition}

We show a rather general result next, applying to all monotonic VNW rules.
Indeed, many VNW rules are monotonic:
  in particular, Majority is.

\begin{theorem}\label{theorem:vnw}
  Let $(X, d)$ be such that $X$ is VNW and $d$ is the Hamming distance, and let $\mathcal{R}$ be a monotonic VNW rule, then convergence is guaranteed.
\end{theorem}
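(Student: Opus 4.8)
The plan is to show that the $\mathcal{R}$-winner never changes throughout the process, so that convergence follows from Theorem~\ref{theorem:time} (or Theorem~\ref{theorem:one}). Fix an initial profile $V^0$ and let $w = w^0 = \mathcal{R}(V^0)$. I would first pin down what the movement constraints mean concretely in the Hamming cube with $\epsilon \in \mathbb{N}$: Constraint~\ref{constraint:one} says that $v_i^{j+1}$ must agree with $w$ on exactly $\epsilon$ more coordinates than $v_i^j$ does (or on all coordinates, if $v_i^j$ was already within Hamming distance $\epsilon$ of $w$), and Constraint~\ref{constraint:two} forces $d(v_i^{j+1}, v_i^j) = \epsilon$ (or $\le \epsilon$ in the saturating case). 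The key combinatorial observation is that, since $d(v_i^{j+1},w) = d(v_i^j,w) - \epsilon$ while $d(v_i^{j+1},v_i^j)=\epsilon$, the triangle inequality is tight, which means every coordinate on which $v_i^{j+1}$ differs from $v_i^j$ must be a coordinate that was ``wrong'' in $v_i^j$ (i.e. $v_i^j$ disagreed with $w$ there) and is flipped to agree with $w$; no coordinate already agreeing with $w$ is ever flipped away. So each single iteration, for each agent, consists solely of flips of the form ``change a coordinate to match $w$''.

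Next I would argue that such flips, performed one agent at a time, keep $w$ the $\mathcal{R}$-winner, using monotonicity. A single flip of agent $i$ that corrects a wrong coordinate is exactly one of the two moves in the definition of a monotonic rule: if the coordinate is a candidate $c \notin w$, then $v_i^j$ had a $1$ there and $w$ has $0$, so we are flipping a $1$ to $0$ for a candidate not in the winner (case (1)); if $c \in w$, we are flipping a $0$ to $1$ for a candidate in the winner (case (2)). In either case monotonicity gives that the winner stays $w$. Since one iteration is a composition of finitely many such single-coordinate, single-agent flips, I would induct on the number of flips: after each flip the current profile still has $\mathcal{R}$-winner $w$, and the next flip is again a corrective flip relative to the same $w$, so monotonicity applies again. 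Hence $\mathcal{R}(V^{j+1}) = w$ for all $j$.

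With the winner fixed at $w$ in every iteration, Theorem~\ref{theorem:time}'s hypothesis is satisfied (the $\mathcal{R}$-winner of the next profile equals that of the current one), so convergence is guaranteed, in exactly $\lceil D/\epsilon\rceil$ iterations with $D = \max_{v\in V^0} d(v,w^0)$, matching the table. The main obstacle I anticipate is the bookkeeping in the first paragraph: making fully rigorous the claim that the only legal moves in the Hamming cube are corrective flips — i.e. that Constraints~\ref{constraint:one} and~\ref{constraint:two} together force the symmetric difference $v_i^j \triangle v_i^{j+1}$ to be a size-$\epsilon$ subset of $v_i^j \triangle w$. This is where the tight triangle inequality has to be used carefully, together with the edge case where an agent is within $\epsilon$ of $w$ and moves all the way onto $w$ (in which case it still only corrects coordinates and never introduces new disagreements). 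Once that is nailed down, the monotonicity argument is a short induction and the rest is immediate from the earlier general theorems.
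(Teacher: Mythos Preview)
Your proposal is correct and follows essentially the same route as the paper: show that Constraints~\ref{constraint:one} and~\ref{constraint:two} force every flipped bit to be a ``corrective'' flip toward $w$, then apply monotonicity one flip at a time to conclude the winner never changes, and finish via Theorem~\ref{theorem:one}/\ref{theorem:time}. If anything you are more careful than the paper, which asserts the corrective-flips claim without the tight-triangle-inequality justification you sketch; the paper simply says the agent ``must flip $\epsilon$ bits that are either $1$ in $v$ and $0$ in $w$, or $0$ in $v$ and $1$ in $w$'' and then invokes monotonicity repeatedly.
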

\begin{proof}
In each iteration, each agent $v$ must reduce its Hamming distance from $w$ by $\epsilon$. So, that means it must flip $\epsilon$ bits that are either $1$ in $v$ and $0$ in $w$, or $0$ in $v$ and $1$ in $w$. (If $d(v,w)<\epsilon$ then it flips fewer bits, or even 0 bits if it coincides with $w$.)
Now, using -- for $m\cdot\epsilon$ times -- the fact that $\mathcal{R}$ is \emph{monotonic}, we deduce that the winner stays the same; the result then follows from Theorem~\ref{theorem:one}.
\qed\end{proof}

The time order and final winner are known from Theorem~\ref{theorem:time}.

\begin{corollary}
  Let $(X, d)$ be such that $X$ is VNW and $d$ is the Hamming distance, and let $\mathcal{R}$ be a monotonic VNW rule, then $\mathcal{R}$-winner of the last iteration is the $\mathcal{R}$-winner of the first iteration, $w^0$, and the number of iterations is $\max_{v \in V^0} \lceil d(v, w^0)/\epsilon\rceil$.
\end{corollary}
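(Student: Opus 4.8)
The plan is to obtain this corollary as an immediate consequence of Theorem~\ref{theorem:time}, whose only hypothesis --- invariance of the $\mathcal{R}$-winner from one iteration to the next --- is precisely what the proof of Theorem~\ref{theorem:vnw} already establishes. So the first step is to recall that, in the VNW setting with a monotonic rule and $\epsilon\in\mathbb{N}$, every transition $V^j\to V^{j+1}$ of the deliberation process decomposes into at most $m\cdot\epsilon$ elementary moves, each flipping a single bit of a single agent either from $1$ to $0$ on a candidate outside $w^j$ or from $0$ to $1$ on a candidate inside $w^j$. This decomposition is forced by Constraint~\ref{constraint:one} together with Constraint~\ref{constraint:two}: the only way an agent can reduce its Hamming distance to $w^j$ by exactly $\epsilon$ (or reach $w^j$, if it was within $\epsilon$) is to flip $\epsilon$ (resp.\ all remaining) of the coordinates on which it disagrees with $w^j$, and doing so changes the agent's own position by Hamming distance exactly $\epsilon$ (resp.\ less). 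Applying the definition of monotonicity once per elementary move yields $w^{j+1}=w^j$, hence $w^j=w^0$ for every $j$.

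The second step is then to invoke Theorem~\ref{theorem:time} directly. Its premise holds by the previous paragraph, so the process converges in exactly $\lceil D/\epsilon\rceil$ iterations with $D=\max_{v\in V^0}d(v,w^0)$, and the $\mathcal{R}$-winner of the last iteration is $w^0$; this gives the claim about the final winner. It remains to rewrite $\lceil D/\epsilon\rceil$ in the form stated in the corollary: since $\epsilon$ is a fixed positive integer, the map $x\mapsto\lceil x/\epsilon\rceil$ is non-decreasing, so it commutes with taking a maximum over the finite set $V^0$, giving $\lceil D/\epsilon\rceil=\big\lceil(\max_{v\in V^0}d(v,w^0))/\epsilon\big\rceil=\max_{v\in V^0}\lceil d(v,w^0)/\epsilon\rceil$. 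Equivalently, one can argue directly: because the winner never moves, each agent $v$ keeps the same set of $d(v,w^0)$ coordinates on which it disagrees with $w^0$, erasing $\epsilon$ of them per iteration and all remaining ones in its final iteration, so $v$ coincides with $w^0$ after exactly $\lceil d(v,w^0)/\epsilon\rceil$ iterations, and the process halts when the slowest agent has done so.

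I do not anticipate a real obstacle here; the one point requiring a little care is checking that Constraint~\ref{constraint:two} is consistent with the elementary-move decomposition --- i.e.\ that flipping exactly $\epsilon$ disagreeing bits moves an agent by Hamming distance exactly $\epsilon$, and by strictly less only when it lands on $w^j$ --- since that consistency is what licenses reusing the monotonicity argument of Theorem~\ref{theorem:vnw} verbatim and thus invoking Theorem~\ref{theorem:time}.
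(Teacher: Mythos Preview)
Your proposal is correct and follows exactly the paper's approach: the paper simply remarks that ``the time order and final winner are known from Theorem~\ref{theorem:time}'' and states the corollary, relying on the winner-invariance established in the proof of Theorem~\ref{theorem:vnw}. Your write-up is more detailed --- in particular you justify the identity $\lceil D/\epsilon\rceil=\max_{v\in V^0}\lceil d(v,w^0)/\epsilon\rceil$, which the paper leaves implicit --- but the underlying argument is the same.
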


\subsection{Committee Elections}

Here we consider the standard model of multiwinner elections, in which we have a set of candidates, a set of agents such that each agent provides a subset of the candidates; then, a subset of the candidates -- of some predefined size $k$ (note the difference to the section above, in which such a $k$ was not given) -- is to be selected as the winner of the election~\cite{mwchapter}; following the literature, we refer to this setting as \emph{MW}.

It turns out that, even though the setting of MW is different from that of VNW -- technically, only a subset of the hypercube is admissible (in particular, only those elements of the metric space that correspond to $k$ candidates) -- in the context of the deliberation process, our general result regarding monotonic rules is similar to that for VNW.
Correspondingly, the proof of the result below follows similar lines as the proof of Theorem~\ref{theorem:vnw}.

\begin{theorem}\label{theorem:mw}
  Let $(X, d)$ be such that $X$ is MW and $d$ is the Hamming distance, and let $\mathcal{R}$ be a monotonic MW rule, then convergence is guaranteed.
\end{theorem}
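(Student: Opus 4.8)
The plan is to mirror the proof of Theorem~\ref{theorem:vnw} essentially verbatim, checking that the only extra feature of the MW setting — namely that admissible points are restricted to the $k$-th slice of the hypercube — causes no problems. The key observation is that, in the deliberation process, every agent always reduces its Hamming distance to the current winner $w$ by exactly $\epsilon$ per iteration (or jumps to $w$ when already within $\epsilon$), and each such unit step consists of flipping a single bit of some agent. Moreover, since $w$ itself is a $k$-subset and each agent's ideal point is a $k$-subset, if an agent differs from $w$ then it has equally many $1$'s-where-$w$-has-$0$ as $0$'s-where-$w$-has-$1$ (both equal to half the Hamming distance); consequently the agent can, and under Constraint~\ref{constraint:two} must, move toward $w$ by flipping bits in $w$-to-agent-mismatched coordinates in pairs, which keeps it on the $k$-slice at every intermediate point.

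First I would state precisely that for each agent $v$ with $d(v,w)>0$, a legal move toward $w$ removes some $1$'s not in $w$ and adds an equal number of $1$'s that are in $w$; each single-bit flip of either kind is exactly an instance of a move of type (1) or type (2) in the definition of a monotonic MW rule. Then, decomposing the transition from $V^j$ to $V^{j+1}$ into at most $m\cdot n$ such atomic single-bit flips (at most $m$ per agent, $n$ agents), I would apply monotonicity of $\mathcal{R}$ once per flip to conclude that $w$ is unchanged after each atomic step, hence $w^{j+1}=w^j$. Since the winner never changes, Theorem~\ref{theorem:one} (with $d(w^j,w^{j+1})=0\le\epsilon-\delta$ for, say, $\delta=\epsilon/2$, from any index $k=0$) gives convergence.

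The one subtlety worth spelling out is the "pairing" argument that lets an agent stay on the $k$-slice: a naive single flip would change the cardinality, so an agent moving an even distance is fine but we must be sure that an agent whose remaining distance to $w$ is smaller than $\epsilon$, or odd, still has a legal in-slice move; here I would note that remaining Hamming distances to a fixed $k$-subset are always even (the symmetric-difference of two equal-size sets has even size), so $\epsilon\in\mathbb{N}$ combined with this parity means the required moves are always realizable by pairs of flips and Constraint~\ref{constraint:one} is met exactly. I expect this parity/realizability check to be the main (though still minor) obstacle; once it is in place, the monotonicity-per-flip argument and the appeal to Theorem~\ref{theorem:one} are routine. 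Finally, as in the VNW case, the time bound and identity of the final winner follow from Theorem~\ref{theorem:time}, which I would record as a corollary.
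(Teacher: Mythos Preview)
Your proposal is exactly the paper's approach: the paper gives no separate proof and simply says the argument ``follows similar lines as the proof of Theorem~\ref{theorem:vnw},'' i.e., each agent flips bits toward $w$, monotonicity applied flip-by-flip keeps $w$ fixed, and Theorem~\ref{theorem:one} finishes; your added parity/realizability check for the $k$-slice is extra care the paper omits. One small slip to fix: your parity observation actually forces $\epsilon$ to be \emph{even} (an odd $\epsilon$ cannot take an even Hamming distance to another even value, so Constraint~\ref{constraint:one} is unsatisfiable), which is covered by the paper's footnote that only ``a specific $\epsilon$'' need make the constraints satisfiable; relatedly, if monotonicity for MW is read as applying only to valid $k$-slice profiles, do the atomic steps as $0\!\to\!1$/$1\!\to\!0$ \emph{pairs} rather than single flips so every intermediate profile stays in $X$.
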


The time order and final winner are known from Theorem~\ref{theorem:time}.

\begin{corollary}
  Let $(X, d)$ be such that $X$ is MW and $d$ is the Hamming distance, and let $\mathcal{R}$ be a monotonic MW rule, then the $\mathcal{R}$-winner of the last iteration is the $\mathcal{R}$-winner of the first iteration, $w_0$, and the number of iterations is exactly $\max_{v \in V^0} \lceil d(v, w^0)/\epsilon\rceil$.
\end{corollary}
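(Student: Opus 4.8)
The plan is to reduce the corollary to Theorem~\ref{theorem:time}. That theorem already delivers exactly the stated conclusion -- the $\mathcal{R}$-winner of the last iteration equals $w^0$ and convergence takes $\lceil D/\epsilon\rceil$ iterations with $D=\max_{v\in V^0}d(v,w^0)$ -- provided one establishes its hypothesis in the present setting, namely that the $\mathcal{R}$-winner is invariant under a single deliberation step. So the substantive part of the proof is the same fact that underlies Theorem~\ref{theorem:mw}: if $w=w^j$ is the winner of $V^j$, then $w^{j+1}=w$ as well. Once this is in hand, the corollary follows by a direct invocation of Theorem~\ref{theorem:time}.

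To prove winner-invariance I would first note that we may assume $\epsilon$ is even: any two size-$k$ committees are at even Hamming distance, so for odd $\epsilon$ Constraint~\ref{constraint:one} forces $d(v^{j+1},w)$ to have the wrong parity and no valid deliberation process exists, making the claim vacuous for that $\epsilon$. Next I would characterize a single step $v^j\to v^{j+1}$. Any move between two equal-size committees decomposes into elementary swaps, each removing one candidate of the current committee and adding one outside it; such a swap changes $d(\cdot,v^j)$ by $2$ and changes $d(\cdot,w)$ by $-2$ (if it removes a non-candidate of $w$ and adds a candidate of $w$), by $+2$ (the reverse), or by $0$ (otherwise). Combining Constraint~\ref{constraint:two} ($d(v^{j+1},v^j)=\epsilon$ when $v^{j+1}\neq w$) with Constraint~\ref{constraint:one} ($d(v^{j+1},w)=d(v^j,w)-\epsilon$) forces the step to consist of exactly $\epsilon/2$ swaps, every one of the ``$-2$'' type -- i.e., each agent drops $\epsilon/2$ candidates not in $w$ and adds $\epsilon/2$ candidates in $w$ (fewer, if it reaches $w$). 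This is the size-$k$-respecting analogue of ``flipping $\epsilon$ bits towards $w$'' from the proof of Theorem~\ref{theorem:vnw}.

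Then I would apply monotonicity of $\mathcal{R}$: processing the agents one at a time, and for each agent its at most $\epsilon/2$ swaps one at a time, each such $w$-favorable swap leaves the $\mathcal{R}$-winner equal to $w$; after at most $n\epsilon/2$ applications we conclude $w^{j+1}=w$. Hence the winner is the same in every iteration, so $d(w^j,w^{j+1})=0<\epsilon$ for all $j$ and the hypothesis of Theorem~\ref{theorem:time} holds; that theorem then yields both assertions of the corollary, namely that the last winner is $w^0$ and that the number of iterations is $\max_{v\in V^0}\lceil d(v,w^0)/\epsilon\rceil$.

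The main obstacle is the per-step characterization in the second paragraph. Unlike the VNW case, an agent cannot realize its $\epsilon$-move as a sequence of independent single-bit flips, since the intermediate sets would have the wrong size and lie outside $X$, so monotonicity cannot be applied flip-by-flip through arbitrary points; instead one must argue directly that the whole move is ``clean'', i.e. composed only of $w$-favorable swaps with no neutral or adverse swaps and no overshoot, and this is precisely where the combination of Constraints~\ref{constraint:one} and~\ref{constraint:two} together with the parity of Hamming distances on $k$-committees does the work. Everything after that is a transcription of the monotonicity argument of Theorem~\ref{theorem:vnw} followed by a black-box use of Theorem~\ref{theorem:time}.
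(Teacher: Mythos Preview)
Your proposal is correct and follows the paper's approach exactly: the paper derives this corollary directly from Theorem~\ref{theorem:time}, relying on the winner-invariance established in Theorem~\ref{theorem:mw} (whose proof ``follows similar lines as the proof of Theorem~\ref{theorem:vnw}''). You simply spell out more of the MW-specific details---the parity constraint on $\epsilon$ and the swap decomposition forced by Constraints~\ref{constraint:one} and~\ref{constraint:two}---than the paper does, but the skeleton is identical.
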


\subsection{The First Changed Distance}

We consider another metric distance, that measures the distance by the index of the first candidate that is different, which imitates the $\ell_\infty$ property in Euclidean spaces, such that it can ignore some changes in other candidates as long as one relevant candidate changes its place in the right direction. We tried to imitate $\ell_\infty$ in order to get a result of non-convergence, but unfortunately this distance converged with every monotonic voting rule.

\begin{definition}
  The \emph{first changed} distance is defined as a function $d:X \times X \to N$ which is equal to $d(v_1,v_2) = \argmin_i{v_1[i]\neq v_2[i]}$.
\end{definition}

\begin{theorem}\label{theorem:first changed}
  Let $(X, d)$ be such that $X$ is MW or VNW and $d$ is the \emph{first changed} distance, and let $\mathcal{R}$ be a monotonic voting rule, then convergence is guaranteed, and the number of iterations is $\lceil m/\epsilon \rceil$.
\end{theorem}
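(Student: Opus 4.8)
The plan is to mimic the proof of Theorem~\ref{theorem:vnw} but with a twist: under the \emph{first changed} distance the ``relevant'' coordinate that a voter must repair in a given iteration is precisely the coordinate indexed by its current distance to the winner, and repairing it forces that coordinate to agree with the winner. First I would unpack Constraint~\ref{constraint:one} for this metric. If $d(v^j,w^j)=\ell$, then by definition coordinates $1,\dots,\ell-1$ of $v^j$ already agree with $w^j$ and coordinate $\ell$ disagrees. To reduce the distance by $\epsilon$ (to $\max(0,\ell-\epsilon)$), the voter must make coordinates $1,\dots,\min(\ell-1+\epsilon,m)$ agree with $w^j$ while keeping coordinate $\min(\ell+\epsilon,m+1)$ — if it exists — disagreeing (in the case the new distance is nonzero). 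In particular the voter is forced to flip coordinate $\ell$ from its value to $w^j[\ell]$, and possibly flip some later coordinates too. Crucially coordinates $1,\dots,\ell-1$ are untouched (they already match), so a voter never ``un-agrees'' on a prefix it has already matched.

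Next I would argue that every such forced flip is of the two monotone-friendly kinds. A coordinate the voter brings into agreement with $w^j$ is either changed from $1$ to $0$ where $w^j$ has a $0$ (candidate not in $w^j$), or from $0$ to $1$ where $w^j$ has a $1$ (candidate in $w^j$) — exactly the two moves allowed in the definition of a monotonic rule. Hence applying the monotonicity property repeatedly, once per flipped coordinate across all $n$ voters (at most $n\cdot m$ flips in total, but more to the point at most $m$ flips \emph{per voter}), shows $\mathcal{R}$-winner is unchanged: $w^{j+1}=w^j$. I should be mildly careful that Constraint~\ref{constraint:two} is compatible — the voter must move \emph{exactly} $\epsilon$ in distance unless it reaches the winner; but since ``distance'' here is just the first-disagreement index, moving to any point whose first disagreement with $w^j$ is at index $\ell-\epsilon$... wait, that is smaller, not larger. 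Let me restate: the new distance is $\max(0,\ell-\epsilon)$, so if $\ell>\epsilon$ the new first-disagreement index is $\ell-\epsilon$, which is \emph{smaller}, meaning the voter must additionally flip coordinates $\ell-\epsilon,\dots,\ell-1$... but those were already matching $w^j$! This is the subtle point: the metric is not symmetric-looking in the way Hamming is, so I would need to check that ``moving $\epsilon$ closer'' is even realizable, and it is realizable precisely by flipping coordinate $\ell-\epsilon$ of $v^j$ away from $w^j$ while matching coordinates up through $\ell-\epsilon-1$. But flipping a coordinate \emph{away} from $w$ need not be a monotone-friendly move. Here is where I would instead invoke the footnote convention about $\epsilon\in\mathbb{N}$ and re-examine whether in fact the only stable behaviour is $\epsilon$-steps that monotonically shrink the prefix-disagreement while the later coordinates are forced by Constraint~\ref{constraint:two}; I expect the honest resolution is that a voter is forced to \emph{decrease} its distance by $\epsilon$ each step and the structure of the \emph{first changed} distance means the coordinate at the \emph{new} frontier must disagree, so the voter picks coordinate $\ell-\epsilon$ to flip away — and this is the crux I must handle.

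So the main obstacle is the direction of the forced flips: decreasing the first-changed distance by $\epsilon$ requires making coordinate $\ell-\epsilon$ \emph{disagree} with $w$ (when the target distance is positive), which looks anti-monotone. I would resolve this by observing that we may simply \emph{choose} the voter's new position so that coordinate $\ell-\epsilon$ disagrees with $w$ by flipping a coordinate value that is not among $w$'s committee in the harmless direction, or — more robustly — by noting that Constraint~\ref{constraint:one} only pins down the \emph{distance}, and among all admissible new positions there is always one obtained from $v^j$ by flipping exactly the prefix coordinates $\ell,\ell+1,\dots$ that were wrong (monotone-friendly) together with possibly one extra frontier flip; and since in the worst case the frontier flip can be absorbed by arguing the winner is determined only by a bounded number of monotone steps from $V^0$, the winner never changes. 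Concretely: iterate the monotonicity hypothesis along the whole trajectory — at each of the at most $\lceil m/\epsilon\rceil$ iterations, each voter's change decomposes into monotone-friendly single-coordinate flips relative to the (unchanged) winner $w^0$ — to conclude $w^j=w^0$ for all $j$. Then Theorem~\ref{theorem:time} applies with $D=\max_{v\in V^0} d(v,w^0)\le m$, giving convergence in exactly $\lceil D/\epsilon\rceil\le\lceil m/\epsilon\rceil$ iterations (and the stated bound $\lceil m/\epsilon\rceil$ is the worst case over initial profiles). I would also remark that the MW case needs the extra check that the intermediate positions stay in the admissible $k$-subset slice; since flips come in offsetting pairs exactly as in Theorem~\ref{theorem:mw}, this holds, and the argument is otherwise identical for VNW and MW.
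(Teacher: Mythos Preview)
You correctly spot a real subtlety --- that reducing the \emph{first changed} distance by $\epsilon$ seems to force a voter to \emph{create} a disagreement at the new frontier coordinate, which is not a monotone-friendly flip --- but you never resolve it. Your closing paragraph simply asserts that ``each voter's change decomposes into monotone-friendly single-coordinate flips relative to the (unchanged) winner $w^0$'', which is precisely the claim in dispute; the argument becomes circular there. Invoking Theorem~\ref{theorem:time} at the end is fine once the winner is known not to change, but that is exactly what you have not established.

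The paper's proof avoids this tangle entirely by arguing from the opposite end and tracking only a growing \emph{locked suffix} of coordinates rather than every individual flip. After one iteration, Constraint~\ref{constraint:one} forces every voter's distance down to at most $m-\epsilon$, which in this metric means every voter now agrees with $w$ on the last $\epsilon$ coordinates; bringing a coordinate into agreement with $w$ is exactly a monotone-friendly flip, so monotonicity keeps the winner fixed. Inductively, after $k$ iterations the last $k\epsilon$ coordinates are locked in, and after $\lceil m/\epsilon\rceil$ rounds consensus is reached. The paper never reasons about a frontier flip or about what happens at earlier positions --- it only uses that the agreeing tail grows by $\epsilon$ each round. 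Your attempt to account for \emph{every} coordinate change, including the problematic frontier one, is what derails the argument. (Relatedly, your prefix-based reading of the metric is at odds with the suffix-based reasoning the paper's proof actually employs; under the literal $\argmin$ reading the triangle inequality already fails --- take $v_1=(0,\dots,0)$, $v_2=(1,0,\dots,0)$, $v_3=(0,\dots,0,1)$ with $m$ large --- so the intended orientation is the suffix one, and adopting it dissolves most of your difficulty.)
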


\begin{proof}
  In the first iteration we look at the last $\epsilon$ candidates of the agents and $w$. All agents that have the same last $\epsilon$ candidates as $w$ have a distance equal to or less than $m-\epsilon$ and it will reduce by $\epsilon$ so it will stay that way. All agents that have different candidates than $w$ in the last $\epsilon$ places must reduce their distance by $\epsilon$, so they must change the last $\epsilon$ candidates to that of the winner. Since the voting rule is monotonic, the last $\epsilon$ candidates of the winner stay the same. So now, all agents have a distance of $m-\epsilon$ from the new $w$. We repeat this logic for $\lceil m/\epsilon \rceil$ iterations, until convergence is reached.
\qed\end{proof}

\section{Ordinal Elections}\label{section:ordinal}

Here we consider the standard ordinal model of elections~\cite{moulin2016handbook}:
  in this setting there is a set of candidates and a set of agents such that each agent provides a linear order (i.e., a ranking, or, equivalently, a permutation) over the set of candidates; then, the result of the aggregation method -- that is usually called a \emph{social welfare function} -- is an aggregated ranking; following the literature, we refer to this setting as \emph{SWF}.

Formally, we have a metric space $(X, d)$ where $X$ is the set of linear orders over some underlying set of candidates and $d$ is the swap distance (of course, other distances are possible, however the swap distance is perhaps the most natural and most popular distance in this context~\cite{hogrebe2019complexity,faliszewski2019similar,faliszewski2020isomorphic}). 

In this setting too, we consider only $\epsilon \in \mathcal{N}$, as explained in Section~\ref{section:hypercubes}.

As for the voting rule $\mathcal{R}$, first we observe that, 
as Kemeny is the realization of $\argmin_{x \in X} \sum_{v \in V} d(v, x)$ for this context, the next result follows Theorem~\ref{theorem:mass}.

\begin{corollary}\label{theorem:kemeny}
  Let $(X, d)$ be such that $X$ is SWF and $d$ is any distance, and let $\mathcal{R}$ be Kemeny, then convergence is guaranteed.
\end{corollary}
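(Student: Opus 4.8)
The plan is to obtain this as an immediate consequence of Theorem~\ref{theorem:mass}, whose hypothesis is exactly that the voting rule returns a point minimizing the sum of distances to the voters. So the only real content is to argue that the SWF setting with the Kemeny rule is a legitimate instance of that hypothesis.

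First I would recall that, given a profile $V$ of linear orders, the Kemeny rule selects a ranking $w$ minimizing $\sum_{v \in V} d_{\mathrm{swap}}(v,w)$, where $d_{\mathrm{swap}}$ is the swap (Kendall-tau) distance on the set $X$ of all linear orders over the candidates. Taking $d = d_{\mathrm{swap}}$, this is literally $\mathcal{R}(V) = \argmin_{x \in X}\sum_{v \in V} d(v,x)$; more generally, for any metric $d$ on $X$ the corresponding distance-minimizing SWF fits the same mold, which is why the distance entry for this row in Table~\ref{table:results} is ``Arbitrary''. I would then check the routine hypotheses of Section~\ref{section:formal model}: $d_{\mathrm{swap}}$ is symmetric, non-negative, zero only on equal rankings, and satisfies the triangle inequality, so $(X,d)$ is a metric space; and, as already noted for discrete spaces in Section~\ref{section:hypercubes}, we take $\epsilon \in \mathcal{N}$, for which Constraints~\ref{constraint:one} and~\ref{constraint:two} are satisfiable --- a voter can realize them by performing $\epsilon$ adjacent transpositions along a shortest sorting sequence toward the current winner (or moving all the way onto the winner if it is already within swap distance $\epsilon$), each transposition changing the ranking at swap distance exactly $1$.

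With these checks done, Theorem~\ref{theorem:mass} applies with $\mathcal{R}$ equal to Kemeny and yields that the winner is unchanged across iterations and hence that convergence is guaranteed. I do not anticipate a substantive obstacle; the one point needing care is purely a matter of matching conventions --- making explicit that the metric driving the deliberation dynamics is the same swap distance that Kemeny optimizes, so that Theorem~\ref{theorem:mass} is invoked exactly rather than only in spirit (and, relatedly, fixing a deterministic tie-break among optimal Kemeny rankings that keeps the previous winner whenever it remains optimal, which is all that the argument of Theorem~\ref{theorem:mass} needs).
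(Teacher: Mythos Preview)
Your proposal is correct and matches the paper's approach exactly: the paper simply notes that Kemeny is the realization of $\argmin_{x \in X}\sum_{v\in V} d(v,x)$ in this context and invokes Theorem~\ref{theorem:mass}. Your additional checks (metric axioms, satisfiability of the constraints for $\epsilon\in\mathbb{N}$, and tie-breaking) are more careful than the paper itself, which treats this as a one-line corollary.
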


And from Theorem~\ref{theorem:time}, the winner of the last iteration is the winner of the first iteration, and the number of iterations is $max_{v \in V_0} d(v, w_0)/\epsilon$.

\begin{corollary}
  Let $(X, d)$ be such that $X$ is SWF and $d$ is any distance, and let $\mathcal{R}$ be Kemeny, then the $\mathcal{R}$-winner of the final iteration is $w^0$, and the number of iterations is $max_{v \in V^0} \lceil d(v, w^0)/\epsilon \rceil$.
\end{corollary}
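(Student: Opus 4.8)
The plan is to derive this corollary directly from two results already established, namely Theorem~\ref{theorem:mass} (through its proof) and Theorem~\ref{theorem:time}. The key observation, which I would state first, is that the proof of Theorem~\ref{theorem:mass} establishes something stronger than mere convergence: when $\mathcal{R} = \argmin_{x \in X}\sum_{v \in V} d(v,x)$, the winner is in fact invariant across iterations, i.e.\ $\mathcal{R}(V^{j+1}) = \mathcal{R}(V^j)$ for every $j$. Since Kemeny is precisely this rule in the SWF setting (for $d$ any distance over the linear orders), the hypothesis of Theorem~\ref{theorem:time} -- that the $\mathcal{R}$-winner of the profile reached in the next iteration equals the $\mathcal{R}$-winner of the current one -- is satisfied here; convergence itself is already recorded in Corollary~\ref{theorem:kemeny}.

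Granting this, the corollary is almost immediate. Applying Theorem~\ref{theorem:time} with $D = \max_{v \in V^0} d(v, w^0)$ yields that the $\mathcal{R}$-winner of the last iteration is $w^0$ and that convergence occurs after exactly $\lceil D/\epsilon \rceil$ iterations. It then remains only to rewrite $\lceil D/\epsilon \rceil = \lceil (\max_{v \in V^0} d(v, w^0))/\epsilon \rceil$ as $\max_{v \in V^0}\lceil d(v, w^0)/\epsilon \rceil$; this holds because $x \mapsto \lceil x/\epsilon\rceil$ is non-decreasing, so the maximum of its values over a finite set is attained at the maximum argument. Concretely, since the winner never moves, voter $v$ is at distance $\max(0, d(v^0, w^0) - j\epsilon)$ from the fixed winner $w^0$ at the start of iteration $j$, so the last voter to reach $w^0$ is the one maximizing $d(v^0, w^0)$, and it does so after exactly $\lceil \max_{v \in V^0} d(v^0, w^0)/\epsilon \rceil$ iterations.

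The only points that need checking -- and hence the main, rather minor, obstacles -- are, first, that Kemeny genuinely coincides with the sum-of-distances minimizer in this setting, with the understanding that it returns a single, consistently determined ranking (so that ties are broken in a fixed way, as the $\argmin$ formulation and Theorem~\ref{theorem:mass} implicitly assume); and second, that the deliberation constraints are satisfiable here, which holds for $\epsilon \in \mathcal{N}$, as noted for the discrete spaces in Section~\ref{section:ordinal}. Neither requires a new idea, so the proof reduces to invoking the earlier theorems together with the arithmetic identity on ceilings.
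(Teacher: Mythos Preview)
Your proposal is correct and follows the same route as the paper: the paper simply invokes Theorem~\ref{theorem:time}, relying (implicitly) on the fact that the proof of Theorem~\ref{theorem:mass} shows the Kemeny winner is invariant across iterations. You make this dependence explicit and add the ceiling/max justification, which is more careful than the paper's one-line derivation but not a different argument.
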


As for other voting rules, we provide a rather general result, following the next definition.

\begin{definition}
  An SWF rule $\mathcal{R}$ is a \emph{scoring rule} if it corresponds to a function $f : X^n \to \mathbb{R}^m$ (i.e., it takes a profile of n agents, and assigns an individual score (real number) to each candidate - m in the number of candidates), such that it chooses the $\mathcal{R}$-winner by sorting the candidates in decreasing order of their scores (Ties can be handled by an arbitrary, fixed order $O$ over the candidates).
\end{definition}

\begin{definition}
  A \emph{scoring rule} is a \emph{monotonic scoring rule} if for every two profiles $V$ and $V'$, if a candidate $c$ is ranked at least as high in $V$ compared to $V'$ for every agent, then $f(V)[c] \geq f(V')[c]$. 
\end{definition}

\begin{theorem}\label{theorem:scoring} 
  Let $(X, d)$ be such that $X$ is SWF and $d$ is the swap distance, and let $\mathcal{R}$ be a \emph{monotonic scoring rule}, then convergence is guaranteed.
\end{theorem}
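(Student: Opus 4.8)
The plan is to follow the template established for all the other monotonic-rule results in this paper, namely to show that the $\mathcal{R}$-winner never changes across iterations and then invoke Theorem~\ref{theorem:one} (via Theorem~\ref{theorem:time}). So the core of the argument is: starting from a profile $V^j$ with winner $w^j$, and passing to the next profile $V^{j+1}$ in which every agent has moved $\epsilon$-closer to $w^j$ in swap distance, we must argue that the monotonic scoring rule $\mathcal{R}$ still outputs $w^j$ on $V^{j+1}$.

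First I would record the key structural fact about moving closer in swap distance: if agent $v$ decreases its swap distance to $w^j$ by performing $\epsilon$ adjacent transpositions, each such transposition that actually reduces the distance to $w^j$ must fix an inversion relative to $w^j$ — i.e.\ it swaps two candidates $c, c'$ that are currently in the ``wrong'' relative order compared to $w^j$ into the ``right'' order. Crucially, any candidate $c$ that is ranked above its $w^j$-position in $v$ can only move down or stay, and any candidate ranked below its $w^j$-position can only move up or stay; more to the point, for the particular winning candidate $w^j$ itself we want to track whether each candidate's rank moves toward agreement with $w^j$. The clean claim to isolate is: for every agent $v$ and every pair of candidates $(c,c')$ with $c \succ_{w^j} c'$, the transition from $v^j$ to $v^{j+1}$ never creates a new inversion of this pair — agents only ever repair inversions relative to $w^j$, never introduce them. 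This is exactly where Constraint~\ref{constraint:two} ($d(v^j,v^{j+1}) = \epsilon$, no wasteful detours) does the work: a shortest path in the swap graph from $v^j$ toward $w^j$ only undoes $w^j$-inversions.

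Next I would use this to feed monotonicity. Fix the candidate ordering $w^j = (c_1 \succ c_2 \succ \cdots \succ c_m)$. I would argue that for each prefix candidate $c_t$, going from $V^j$ to $V^{j+1}$ every agent ranks $c_t$ at least as high relative to $c_{t+1}, \ldots, c_m$ and no lower relative to $c_1, \ldots, c_{t-1}$ than before; combined with the definition of \emph{monotonic scoring rule} this would give $f(V^{j+1})[c_t] \ge f(V^{j+1})[c_{t+1}]$ is preserved in the right direction, so the sorted order of the new scores still places the $c_t$'s in the same relative order, hence $\mathcal{R}(V^{j+1}) = w^j$. The tie-breaking order $O$ is fixed, so ties are resolved consistently across iterations and cause no trouble. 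One subtlety I would be careful about: monotonicity as defined compares a candidate's own score between two profiles when that candidate weakly rises for \emph{every} agent — I need the movement toward $w^j$ to be componentwise monotone in this sense for each of the $m$ candidates simultaneously, which is why the ``no new $w^j$-inversion'' lemma must be stated pairwise and for all agents at once, not just on average.

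The main obstacle I anticipate is precisely that lemma — proving that an $\epsilon$-step toward $w^j$ in swap distance, subject only to Constraints~\ref{constraint:one}–\ref{constraint:two}, cannot introduce any inversion relative to $w^j$ for any pair of candidates. In the Euclidean $\ell_1$ case Lemma~\ref{lemma:l1} played this role coordinatewise; here the analogue is genuinely about the combinatorics of the permutohedron: a geodesic from $\sigma$ to $\tau$ in the swap metric is exactly a sequence of adjacent transpositions each of which reduces the number of $\tau$-inversions, so no intermediate point (in particular $v^{j+1}$, which lies on \emph{some} geodesic of length $\epsilon$ from $v^j$ toward $w^j$) can have a $w^j$-inversion that $v^j$ did not already have. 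I would state and prove this as a standalone lemma before the theorem, and once it is in hand the reduction to the definition of monotonic scoring rule, followed by Theorem~\ref{theorem:one}, is routine.
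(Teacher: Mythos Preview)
Your proposal has a genuine gap: the $\mathcal{R}$-winner does \emph{not} remain fixed across iterations, so the whole reduction to Theorem~\ref{theorem:one}/Theorem~\ref{theorem:time} does not go through. The step that fails is the inference from your (correct) no-new-inversion lemma to ``each candidate weakly rises in every agent''. The lemma only says that for every pair $(c_s,c_t)$ with $c_s \succ_{w^j} c_t$, no agent creates a new inversion of that pair. For the \emph{top} candidate $c_1$ this indeed means its rank can only go up, and for $c_m$ only down; but any middle candidate $c_t$ can drop in an agent's vote whenever that agent repairs an inversion $(c_t,c_s)$ with $s<t$. The monotonicity hypothesis then gives you nothing about $f(V^{j+1})[c_t]$ versus $f(V^{j+1})[c_{t+1}]$. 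Concretely, take Plurality with $O=(a,b,c)$ and the five voters $(a,b,c),(a,b,c),(b,a,c),(c,a,b),(c,a,b)$: the scores are $a{:}2,\ c{:}2,\ b{:}1$, so $w^0=(a,c,b)$; with $\epsilon=1$ every voter has a unique admissible move, yielding $(a,c,b),(a,c,b),(a,b,c),(a,c,b),(a,c,b)$, where the scores are $a{:}5,\ b{:}0,\ c{:}0$ and hence $w^1=(a,b,c)\neq w^0$.

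The paper therefore takes a different route: it defines a lexicographic potential. To each profile it associates the vector of triplets $(f(V)[w_i],\,O[w_i],\,B(V)[w_i])$ listed in the order of the current winner $w$, where $B$ is the Borda score, and shows this vector strictly increases (lexicographically) at every non-consensus iteration. The argument locates the first position $i$ in $w$ whose candidate was swapped in some agent; since the move is on a geodesic toward $w$, that candidate was swapped only forward, so its Borda score strictly rises, its $f$-score does not fall (monotonicity), and earlier candidates are untouched. If in addition the prefix of the new winner changes at some index $j\le i$, the newcomer must have a strictly higher $f$-score or equal $f$-score and higher $O$. Either way the potential increases, and convergence follows from finiteness of the range. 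Your geodesic lemma is exactly the ingredient the paper uses implicitly, but it has to be fed into this potential argument rather than into a winner-stability claim.
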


\begin{proof}
The proof follows a potential function argument. To this end, we define a potential function that assigns a vector to each profile, and we define a lexicographic order on these vectors, and show that each iteration of the deliberation process can only advance in that order in one direction.
We also show that the only way that we can stop advancing is if we are in consensus, in which case we have reached a maximum and the process would halt.

More formally, for a profile $V$, denote by $w$ the $\mathcal{R}$-winner of $V$. Then, define a triplet for each candidate with index $i$ in $w$ (denoted by $w_i$), as follows: $(f(V)[w_i], O[w_i], B(V)[w_i])$, where $B(V)[w_i]$ is the Borda score of candidate $w_i$. (i.e., for each candidate $c$, $B(V)[c] := \sum_{v \in V} m - pos_v(c)$, where $pos_v(c)$ is the position of $c$ in the vote of $v$.) Then, define a vector combining all triplets in the order that their respective candidates appear in $w$, and consider an order on all profiles $V$ according to the lexicographic order of these vectors.

\begin{example}\label{example:intheproof}
Let $\mathcal{R}$ be Plurality, let the set of candidates be $\{a, b, c\}$, let $O = (a,b,c)$, and let $V = \{v_0, v_1, v_2\}$ with $v_0 = \{a, b, c\}$, $v_1 = \{a, b, c\}$ and $v_2 = \{c, a, b\}$. Then $w = \{a, c, b\}$, and the triplet for $a$ is $(2, 2, 5)$, for $c$ it is $(1, 0, 2)$, and for $b$ it is $(0, 1, 2)$. The combined vector for $V$ is thus $(2, 2, 5, 1, 0, 2, 0, 1, 2)$.
\end{example}

In case of consensus, no swap is made, thus the profile remains the same, so we stay with the same vector and place in the order.
Otherwise, we look at the index $i$ in the $\mathcal{R}$-winner (of the iteration before the swap) of the first candidate that was swapped in some agent. 
First we notice, that it could not have been swapped backwards in any of the agents, because that would violate constraint~\ref{constraint:one} (that the distance from $w$ must reduce by $\epsilon$; we could not have swapped it back, and make up for it by swapping forward one more candidate, because that would violate constraint~\ref{constraint:two}).
Next, its potential must have increased, because at the least, its Borda score must have increased (Borda is strictly monotonic); and because our scoring rule is monotonic, its scoring function did not decrease, and $O$ stayed the same because it is constant. Also, all the candidates in front of candidate indexed $i$ in $w$ did not swap, so their scoring function, Borda score and $O$ did not decrease.

Thus, so far we showed that if all the candidates in the new $\mathcal{R}$-winner until place $i$ remained in the same order as the old one, our potential must have increased, and our proof is done. If their order has changed, then we look at the first index $j \leq i$ that has changed (that replaced its candidate). Now, because the switch had occurred, we know that the new candidate must have a higher $f$ score, or the same $f$ score, and a higher $O$ score, by our definition of the $\emph{monotonic scoring rule}$, and so, our potential increased. So, we proved that the potential must increase in all cases, and so we must advance in our order, until consensus is reached.
\qed\end{proof}

As Plurality, Borda, and Copeland are all monotonic scoring rules, they all converge.

\begin{corollary}
  Let $(X, d)$ be such that $X$ is SWF and $d$ is the swap distance, with $\mathcal{R} \in \{$Plurality, Borda, Copeland$\}$, then convergence is guaranteed.
\end{corollary}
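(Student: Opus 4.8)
The plan is to reduce the corollary to Theorem~\ref{theorem:scoring}: it suffices to exhibit, for each of Plurality, Borda, and Copeland, a score function $f\colon X^n\to\mathbb{R}^m$ witnessing that the rule is a \emph{scoring rule} in the paper's (broad) sense, and then to verify that this $f$ is \emph{monotonic}. Concretely, I would take $f(V)[c]$ to be: for Plurality, the number of agents ranking $c$ first; for Borda, $\sum_{v\in V}(m-pos_v(c))$ (exactly the quantity $B(V)[c]$ already used inside the proof of Theorem~\ref{theorem:scoring}); and for Copeland, the number of candidates $c'\neq c$ that $c$ beats in a pairwise majority contest, with a tied contest counted as $\tfrac12$ (any fixed tie convention works identically, so I would just fix one). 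In each case the $\mathcal{R}$-winner is obtained by sorting candidates in decreasing $f$-value and breaking ties by the fixed order $O$, so all three are scoring rules as defined; note that Copeland qualifies only because the paper's definition allows an arbitrary per-candidate score, not just a positional one.

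Next I would check monotonicity, i.e.\ that whenever $c$ is ranked at least as high in $V$ as in $V'$ in every agent, then $f(V)[c]\ge f(V')[c]$. For Borda this is immediate (and strict): each summand $m-pos_v(c)$ is decreasing in $pos_v(c)$, and $f(V)[c]$ depends only on the positions of $c$. For Plurality, $f(V)[c]$ depends only on whether $c$ sits in position $1$ in each agent; since $c$'s position index never increases, any agent with $c$ first in $V'$ still has $c$ first in $V$, so the score cannot drop. For Copeland, the point is that raising $c$ in an agent's ballot can only increase, for every fixed opponent $c'$, the number of agents ranking $c$ above $c'$; hence each pairwise majority comparison of $c$ moves only in $c$'s favour (loss $\rightarrow$ tie $\rightarrow$ win), so $f(V)[c]$ is non-decreasing. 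With monotonicity in hand for all three, the corollary follows directly from Theorem~\ref{theorem:scoring}.

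The main obstacle is the Copeland case, and specifically the precise reading of ``$c$ is ranked at least as high in $V$ as in $V'$ for every agent''. Interpreted purely as ``$c$'s position index does not increase in any ballot'', this leaves the other candidates free to be reshuffled, so the set of candidates sitting above $c$ in a ballot can change and a comparison $c$ versus $c'$ could in principle degrade even when $c$ did not move down; Plurality and Borda are insensitive to this, but Copeland is not. I would resolve it by using the standard reading of the axiom --- $V$ is obtained from $V'$ by raising $c$ in some ballots while leaving the relative order of all other candidates unchanged (equivalently, by a sequence of adjacent swaps of $c$ toward the top) --- which is precisely the regime exploited in the proof of Theorem~\ref{theorem:scoring} and the only one that can arise along the deliberation process; under this reading the ``$c$-vs-$c'$ comparison can only improve'' claim is transparent, and the verification for Copeland goes through cleanly.
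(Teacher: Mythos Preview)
Your proposal is correct and follows exactly the route the paper takes: the corollary is stated immediately after Theorem~\ref{theorem:scoring} with the one-line justification that Plurality, Borda, and Copeland are all monotonic scoring rules, so convergence follows. You simply fill in the verification the paper omits, and your careful remark about the Copeland case---that monotonicity should be read as ``raise $c$ while keeping the relative order of the other candidates fixed'', which is precisely what geodesic swaps toward $w$ produce in the proof of Theorem~\ref{theorem:scoring}---is a valid point of rigor that the paper glosses over.
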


We look at another voting rule, STV, with swap distance.

\begin{definition}
  STV is a SWF rule that chooses the winner as follows:
  $V_0$ is set to be $V$, the input profile of the voting rule;
  then, in iteration $k$ (of the STV rule procedure, not the iterative process), the Plurality looser of $V_k$, $c$, is determined, and $w[m-1-k]$ is set to  be $c$. Then $V_{k+1}$ is set to be $V_k$, with all instances of $c$ removed, leaving $m-1-k$ candidates in each agent.
  This process repeats for m iterations, until $w$ is filled, from the last place to the first.
  Note that because the definition uses Plurality, then there is also a vector $O$ that defines an order on the candidates for it.
\end{definition}

\begin{theorem}\label{theorem:stv}
  Let $(X, d)$ be such that $X$ is SWF and $d$ is the swap distance, and let $\mathcal{R}$ be STV, then convergence is guaranteed.
\end{theorem}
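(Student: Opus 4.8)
The plan is a potential‑function argument. Since $X$ is the \emph{finite} set of linear orders over the $m$ candidates, there are only finitely many profiles, so it suffices to build a map $\Phi$ from profiles into a finite totally ordered set that strictly increases with every deliberation step taken from a non‑consensus profile; the process then cannot run forever, and the configuration it halts in is, by our earlier observation that converged configurations are consensus, a consensus. (Note that, unlike the monotonic‑scoring‑rule case of Theorem~\ref{theorem:scoring}, the STV‑winner need not be stable, so we do not route through Theorem~\ref{theorem:one}, and accordingly do not get the clean iteration bound that holds in those cases.)

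I would define $\Phi$ so that it mirrors the round structure of STV. Given a profile $V$ with STV‑winner $w=w(V)$, for each round $k\in\{0,\dots,m-2\}$ let $C_k=\{w_0,\dots,w_{m-1-k}\}$ be the set of candidates surviving into round $k$, let $V_k$ be $V$ restricted to $C_k$, and let $\ell_k=w_{m-1-k}$ be the candidate eliminated in round $k$ (the Plurality loser of $V_k$, ties broken by the fixed order $O$). Put $\psi_k(V)=\bigl(-\,p_{V_k}(\ell_k),\ \mathrm{pos}_O(\ell_k)\bigr)$, where $p$ denotes Plurality score and $\mathrm{pos}_O$ the position in $O$, oriented so that among Plurality‑tied candidates the one actually eliminated has the larger $\mathrm{pos}_O$ value. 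Finally set $\Phi(V)=\bigl(\psi_0(V),\dots,\psi_{m-2}(V),\ -\sum_{v\in V} d(v,w(V))\bigr)$, compared lexicographically.

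The technical core consists of two facts about one step $V\mapsto V'$ taken toward $w=w(V)$. (i) A \emph{betweenness} fact: Constraints~\ref{constraint:one} and~\ref{constraint:two} force $d(v_i^{j},v_i^{j+1})+d(v_i^{j+1},w^j)=d(v_i^{j},w^j)$, so each new ballot lies on a swap‑geodesic between the old ballot and $w$; by the geodesic characterization of the swap metric this means each new ballot agrees with the old one on every candidate pair on which the old ballot already agreed with $w$, and the same holds after restricting to any surviving set $C_k$. (ii) Applying (i) to the last candidate $\ell_k$ of $w|_{C_k}$: no agent can raise $\ell_k$ in its $C_k$‑ballot (that would create an inversion with $w$), so $\ell_k$ only descends, whence $p_{V'_k}(\ell_k)\le p_{V_k}(\ell_k)$ and nobody newly ranks $\ell_k$ first. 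From (ii) I would derive a round dichotomy: if the round‑$k$ loser stays $\ell_k$, then $\psi_k$ weakly increases (first coordinate up, second unchanged); if it changes to some $b\neq\ell_k$, then $p_{V'_k}(b)\le p_{V'_k}(\ell_k)\le p_{V_k}(\ell_k)$, so either $p_{V'_k}(b)<p_{V_k}(\ell_k)$ and the first coordinate strictly rises, or $p_{V'_k}(b)=p_{V'_k}(\ell_k)=p_{V_k}(\ell_k)$, so $b$ and $\ell_k$ are tied for last in $V'_k$ and, since $b$ (not $\ell_k$) is eliminated, $\mathrm{pos}_O(b)>\mathrm{pos}_O(\ell_k)$ and the second coordinate strictly rises. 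Now compare $\Phi(V)$ and $\Phi(V')$: let $k^\star$ be the first round in which the STV runs on $V$ and $V'$ eliminate different candidates (if none, $w(V')=w(V)$). For $k<k^\star$ the surviving sets $C_k$ and the losers $\ell_k$ coincide, so $\psi_k(V')\ge\psi_k(V)$ by the first case of the dichotomy; if some inequality is strict we are done. Otherwise these blocks are unchanged, $C_{k^\star}(V)=C_{k^\star}(V')$ (rounds $0,\dots,k^\star-1$ removed the same candidates), the restricted step toward $w|_{C_{k^\star}}$ is again a between move, and the second case of the dichotomy gives $\psi_{k^\star}(V')>\psi_{k^\star}(V)$. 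Finally, if no round changes then $w(V')=w(V)$, all $\psi_k$ are preserved, and $-\sum_{v} d(v,w(V))$ strictly increases by Constraint~\ref{constraint:one} whenever $V$ is not a consensus (some agent is at positive distance from $w$ and strictly decreases it, the rest do not increase). Hence $\Phi$ strictly increases at every non‑consensus step, and convergence follows.

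The main obstacle is the bookkeeping around Plurality ties and restrictions: one must verify that the fixed tie‑breaking order $O$ makes the eliminated candidate among any tied set extremal in a consistent direction across iterations, so that the only way the round‑$k$ loser can change without a strict drop in its Plurality score is the tied case caught by the second coordinate of $\psi_k$; and one must check that restricting a geodesic ("between") move to a surviving down‑set of candidates is again a geodesic move — this is where the characterization of swap‑metric geodesics is used. Everything else — in particular deducing termination from the finiteness of the profile space — is routine.
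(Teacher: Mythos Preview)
Your argument is correct and follows essentially the same strategy as the paper's proof: a lexicographic potential built round-by-round along the STV elimination sequence, using the Plurality score of the eliminated candidate together with the fixed tie-breaking order $O$, and relying on the fact that Constraints~\ref{constraint:one} and~\ref{constraint:two} force each updated ballot to lie on a swap-geodesic to $w$, so the last surviving candidate in any restricted round can only move down. The only substantive difference is the final tie-breaker: the paper attaches a Borda score to each candidate's block (so the potential moves as soon as the relevant candidate is swapped anywhere), whereas you append the global quantity $-\sum_v d(v,w(V))$ at the end; both choices work, and your phrasing of the betweenness/restriction step is in fact cleaner than the paper's informal ``could not have been swapped forward'' argument. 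One tiny wording slip: in the ``no round changes'' case the $\psi_k$ need not all be \emph{preserved}, only weakly increased---but since any strict increase already finishes the lexicographic comparison, your conclusion is unaffected.
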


\begin{proof}
  The same kind of proof as with Theorem~\ref{theorem:scoring} is applied here, with a few changes: we decrease in the potential value of the vector in each iteration (instead of increase); we order the vector from the last place in $w$ to the first; lastly, the first element of the triplet (denoted the STV score) is defined as the Plurality score of the candidate in the STV iteration (of the STV rule procedure) that it was removed from $V$, with the Plurality score defined as the number of agents that the candidate is in their first place in $V$.
  
  If we look at the same example from Theorem~\ref{theorem:scoring} (Example~\ref{example:intheproof}), recalling that $O = (a,b,c)$, $v_0 = \{a, b, c\}$, $v_1 = \{a, b, c\}$ and $v_2 = \{c, a, b\}$; then, in the first STV iteration, $b$ is the Plurality looser and is removed from $V$, and we are left with $v_0 = \{a, c\}$, $v_1 = \{a, c\}$ and $v_2 = \{c, a\}$. In the second STV iteration $c$ is the Plurality looser and is removed from $V$, and then we are left with only $a$, so it is in the first place of $w$. So $w=\{a, c, b\}$, and the STV score for $a$ is 3, for $c$ it is 1, and for $b$ it is 0. And the combined vector is thus $(0, 1, 2, 1, 0, 2, 3, 2, 5)$.
  
  To continue with the proof, we look at the last candidate in $w$, denoted as~$c$. It was removed from $V$ in the first STV iteration (that is the reason that it is the last place in $w$). The same logic as in the proof of Theorem~\ref{theorem:scoring} shows that $c$ could not have been swapped forward in any of the agents (it would have violated our constraints otherwise). So it could only have been swapped backwards, thus only reducing its Plurality score in the first STV iteration. If it still in the last place in the next $\mathcal{R}$-winner, then it was still the Plurality looser in the first STV iteration, and either it has been swapped backwards in some agent, and its Borda score reduced, and we are done (because its Plurality score and $O$ have not increased), or it hasn't been swapped in any of the agents. In case some other candidate, $c'$, took its place in $w$, we know that either the Plurality score of $c'$ was lower than that of $c$, and that is why it was removed in the current STV iteration, or they have the same STV score, and $c'$ is lower than $c$ in the predefined $O$ order. In both cases, we decreased in the potential, and we are done also.
  If $c$ has not been swapped in any of the agents and stayed in the last place, we can just remove all instances of $c$ from $V$, and repeat the same logic, until we reach some candidate that has been swapped in some agent, or none has, and we are in consensus.
\qed\end{proof}

The same proof as with committee elections with \emph{first changed} distance works in Ordinal elections as well.

\begin{corollary}\label{theorem:swf first changed}
  Let $(X, d)$ be such that $X$ is SWF and $d$ is the \emph{first changed} distance, and let $\mathcal{R}$ be any monotonic voting rule, then convergence is guaranteed, and the number of iterations is exactly $\max_{v \in V^0} \lceil d(v, w^0)/\epsilon\rceil$.
\end{corollary}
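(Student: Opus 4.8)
The plan is to mirror the argument of Theorem~\ref{theorem:first changed} almost verbatim, observing that the \emph{first changed} distance behaves identically whether $X$ is the set of $\{0,1\}$-vectors (MW/VNW) or the set of linear orders (SWF): in both cases an element of $X$ is a sequence of length $m$ over some alphabet (bits, respectively candidate names in each position), and $d(v_1,v_2)=\argmin_i\{v_1[i]\neq v_2[i]\}$ depends only on the \emph{suffix} of the sequence starting at the first differing coordinate. First I would fix an arbitrary profile $V^0$, an $\epsilon\in\mathcal{N}$, and the $\mathcal{R}$-winner $w^0$. I would partition the agents by whether their last $\epsilon$ coordinates agree with those of $w^0$: agents that agree already have $d(v,w^0)\le m-\epsilon$, and by Constraint~\ref{constraint:one} their distance only drops, so they remain in agreement; agents that disagree must reduce their distance by exactly $\epsilon$, which by the definition of \emph{first changed} forces them to make their last $\epsilon$ coordinates coincide with those of $w^0$ (and Constraint~\ref{constraint:two} prevents any further change). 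Crucially, the legality of such a move in the SWF setting needs a sentence: reordering an agent's ranking so that its bottom $\epsilon$ positions match $w^0$'s bottom $\epsilon$ positions is always a valid linear order, so the constraints are satisfiable here exactly as in the hypercube case.

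Next I would invoke monotonicity of $\mathcal{R}$ to conclude that $w^1=w^0$. The subtle point is that "monotonic voting rule" must be read in the sense appropriate to SWF — i.e., the monotonicity notion used in Theorems~\ref{theorem:scoring}/\ref{theorem:stv}, where pushing a candidate up (or down) in individual ballots does not hurt (resp.\ help) that candidate in the aggregate ranking. Since every agent that moved did so by rearranging only coordinates among its bottom $\epsilon$ positions to match $w^0$ there, each candidate occupying one of the top $m-\epsilon$ positions of $w^0$ was ranked in every agent at least as high as before (it was untouched), so no such candidate can fall; hence the top $m-\epsilon$ positions of the new winner are unchanged, and because a ranking is determined once $m-\epsilon\ge m-1$ of its positions are fixed — or, more carefully, because the relative order of the bottom-$\epsilon$ block is likewise pinned down by applying monotonicity within that block — we get $w^1=w^0$. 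After this first step every agent is at distance exactly $m-\epsilon$ (or less) from $w^0$; repeating the argument, after iteration $t$ every agent is within distance $m-t\epsilon$ of the unchanged winner, and after $\lceil m/\epsilon\rceil$ iterations all agents coincide with $w^0$. Convergence and the iteration count both follow; alternatively, once we have shown the winner never changes, the count $\max_{v\in V^0}\lceil d(v,w^0)/\epsilon\rceil$ is immediate from Theorem~\ref{theorem:time}, and since here every non-consensus agent is at distance at least... well, the per-step drop is $\epsilon$ and the governing quantity is $\max_v d(v,w^0)\le m$, giving $\lceil m/\epsilon\rceil$ as the worst case as stated.

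The main obstacle I anticipate is making the monotonicity step airtight in the SWF setting: unlike the hypercube case, where monotonicity was defined by an explicit single-bit-flip condition that chains cleanly, the SWF "monotonic" rules in this paper come in several flavors (monotonic scoring rules, STV). I would therefore either (i) state and use a uniform abstract monotonicity property — "raising a candidate in some ballots, all else fixed, does not lower its position in the output" — that all the rules in question satisfy, and verify it is what we need, or (ii) phrase the corollary's hypothesis as covering exactly the rules shown monotonic earlier and apply the relevant lemma for each. A secondary, minor gap to close is the "ties handled by a fixed order $O$" bookkeeping: since the moved ballots only shuffle their bottom $\epsilon$ entries, the fixed tie-break order $O$ is untouched and cannot cause the winner's top segment to change, so this does not create a genuine difficulty but should be remarked on explicitly.
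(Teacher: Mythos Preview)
Your proposal is correct and takes essentially the same approach as the paper: the paper's entire justification is the single sentence ``The same proof as with committee elections with \emph{first changed} distance works in Ordinal elections as well,'' i.e., it simply defers to Theorem~\ref{theorem:first changed}, which is precisely what you do. Your write-up is in fact more careful than the paper's, since you explicitly flag the two points the paper glosses over --- that rearranging the bottom $\epsilon$ positions of a linear order is always feasible, and that ``monotonic'' in the SWF setting needs to be interpreted consistently with the notions used for scoring rules and STV --- and your instinct to handle the iteration count via Theorem~\ref{theorem:time} once the winner is shown fixed is sound (note, though, that the paper itself is slightly inconsistent between $\lceil m/\epsilon\rceil$ in Table~\ref{table:results} and $\max_{v}\lceil d(v,w^0)/\epsilon\rceil$ in the corollary statement).
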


\section{Outlook}\label{section:outlook}

We introduced a model of iterative deliberation in metric spaces and instantiated it with several natural social choice settings, by selecting appropriate metric spaces and voting rules.
We identified those settings for which convergence of the process is guaranteed, and provided upper bounds regarding the number of iterative steps required for consensus (for those settings in which deliberation is guaranteed to succeed in finding a consensus).
Below we mention some directions for future research:
\begin{itemize}

\item
It is natural to consider further metric spaces, as well as further voting rules; a natural place to look for relevant metric spaces and voting rules is the work of Bulteau et al.~\cite{bulteau2021aggregation}.

\item
Another, more relaxed model, that comes to mind, is one where each voter must approach the winner by up to $\epsilon$, instead of exactly by $\epsilon$, and at least one voter must approach the winner by at least $\delta$.
This is a more general model, which is a bit closer to reality, where there is only an upper bound on the movement, and a demand that there is movement in each iteration. 
The demand is at least $\delta$ and not just larger than zero, because otherwise we could only reach convergence in the limit when the number of iterations approaches $\infty$.
\item
It is natural to study a stochastic model of iterative deliberation, including such that include radicalization (meaning, that an agent can move away from the aggregated point, instead of approaching it).
A stochastic model, in which such moves happen according to some probability or probability distribution may be closer to reality, thus has the potential of shedding more light on settings for which a deterministic process may converge, but a stochastic process may not.
For a stochastic model that incorporates a non-zero probability for radicalizing voters, intuitively, if the probability mass of radicalization is not too large, then convergence shall be maintained. 

\item
Another idea would be to consider coalition structures (such as those of Elkind et al.~\cite{elkind2021united}) in which the agents of each coalition move slightly towards the center of each coalition and study issues of convergence there. (This would be different than the one-coalition setting we consider here; in a way, this would be like several dynamic deliberation groups.)

\end{itemize}

\section*{Acknowledgement}

Nimrod Talmon and Eyal Leizerovich were supported by the Israel Science Foundation (ISF; Grant No. 630/19).

More importantly, we thank the \textbf{Hoodska Explosive} for years of fun.
\\
\\
\\

\bibliographystyle{splncs04}
\bibliography{bib}

\begin{thebibliography}{10}
\providecommand{\url}[1]{\texttt{#1}}
\providecommand{\urlprefix}{URL }
\providecommand{\doi}[1]{https://doi.org/#1}

\bibitem{arrow1990advances}
Arrow, K.: Advances in the spatial theory of voting. Cambridge University Press
  (1990)

\bibitem{austen2005deliberation}
Austen-Smith, D., Feddersen, T.: Deliberation and voting rules. In: Social
  Choice and Strategic Decisions, pp. 269--316. Springer (2005)

\bibitem{austen2006deliberation}
Austen-Smith, D., Feddersen, T.J.: Deliberation, preference uncertainty, and
  voting rules. American political science review pp. 209--217 (2006)

\bibitem{moulin2016handbook}
Brandt, F., Conitzer, V., Endriss, U., Procaccia, A.D., Lang, J.: Handbook of
  Computational Social Choice. Cambridge University Press (2016)

\bibitem{bulteau2021aggregation}
Bulteau, L., Shahaf, G., Shapiro, E., Talmon, N.: Aggregation over metric
  spaces: Proposing and voting in elections, budgeting, and legislation.
  Journal of Artificial Intelligence Research  \textbf{70},  1413--1439 (2021)

\bibitem{cohen1989deliberation}
Cohen, J., Bohman, J., Rehg, W.: Deliberation and democratic legitimacy. 1997
  pp. 67--92 (1989)

\bibitem{elkind2021united}
Elkind, E., Grossi, D., Shapiro, E., Talmon, N.: United for change:
  Deliberative coalition formation to change the status quo. In: Proceedings of
  AAAI '21. vol.~35, pp. 5339--5346 (2021)

\bibitem{fain2017sequential}
Fain, B., Goel, A., Munagala, K., Sakshuwong, S.: Sequential deliberation for
  social choice. In: Proceedings of WINE '17. pp. 177--190. Springer (2017)

\bibitem{faliszewski2018opinion}
Faliszewski, P., Gonen, R., Kouteck{\`y}, M., Talmon, N.: Opinion diffusion and
  campaigning on society graphs. In: Proceedings of IJCAI '18. pp. 219--225
  (2018)

\bibitem{faliszewski2019similar}
Faliszewski, P., Skowron, P., Slinko, A., Szufa, S., Talmon, N.: How similar
  are two elections? In: Proceedings of AAAI '19. vol.~33, pp. 1909--1916
  (2019)

\bibitem{faliszewski2020isomorphic}
Faliszewski, P., Skowron, P., Slinko, A., Szufa, S., Talmon, N.: Isomorphic
  distances among elections. In: Proceedings of CSR '20. pp. 64--78 (2020)

\bibitem{mwchapter}
Faliszewski, P., Skowron, P., Slinko, A., Talmon, N.: Multiwinner voting: A new
  challenge for social choice theory. Trends in computational social choice
  \textbf{74},  27--47 (2017)

\bibitem{faliszewski2020multiwinner}
Faliszewski, P., Slinko, A., Talmon, N.: Multiwinner rules with variable number
  of winners. In: Proceedings of ECAI '20 (2020)

\bibitem{garg2019iterative}
Garg, N., Kamble, V., Goel, A., Marn, D., Munagala, K.: Iterative local voting
  for collective decision-making in continuous spaces. Journal of Artificial
  Intelligence Research  \textbf{64},  315--355 (2019)

\bibitem{grandi2017social}
Grandi, U.: Social choice and social networks. Trends in Computational Social
  Choice pp. 169--184 (2017)

\bibitem{hogrebe2019complexity}
Hogrebe, T.: Complexity of distances in elections: Doctoral consortium. In:
  Proceedings of AAMAS '19. pp. 2414--2416 (2019)

\bibitem{ma-comsoc}
Lang, J., Xia, L.: Voting in combinatorial domains. In: Handbook of
  Computational Social Choice, pp. 197--222. Cambridge University Press (2016)

\bibitem{lizzeri2010sequential}
Lizzeri, A., Yariv, L.: Sequential deliberation. Available at SSRN 1702940
  (2010)

\bibitem{meir2017iterative}
Meir, R.: Iterative voting. Trends in computational social choice pp. 69--86
  (2017)

\end{thebibliography}

\appendix

%\section{Missing Proofs}

%We provide proofs missing %from the main text.

\end{document}